\documentclass[12pt]{llncs}
\usepackage{amssymb,amsmath,color,tikz,subfig}

\usepackage[margin=1.25in]{geometry}
\usetikzlibrary{through,calc,patterns}

\def\bx{\vec x}\def\by{\vec y}
\def\bp{\vec p}
\def\R{\mathbf R}
\def\ne{\nearrow}\def\se{\searrow}
\def\nw{\nwarrow}\def\sw{\swarrow}
\def\sne{{\nearrow\!\!\!\nearrow}}\def\sse{{\searrow\!\!\!\searrow}}
\def\snw{{\nwarrow\!\!\!\nwarrow}}\def\ssw{{\swarrow\!\!\!\swarrow}}

\def\oriendisc #1#2 {%
\begin{scope}
	\draw
		let \p1 = ($ #2 - #1 $), \n1 = {veclen(\x1,\y1)/2} in
		[dashed] ($#1!.5!#2$) circle (\n1);
	\clip
		let \p1 = ($ #2 - #1 $), \n1 = {veclen(\x1,\y1)/1.5} in
		#1 circle (\n1);
	\fill[color=black] #1 circle (0.35);
	\clip
		let \p1 = ($ #2 - #1 $), \n1 = {veclen(\x1,\y1)/2} in
		[preaction={draw=black,thick}] ($#1!.5!#2$) circle (\n1);
\end{scope}
}

\bibliographystyle{splncs03}

\numberwithin{theorem}{section}
\numberwithin{lemma}{section}
\numberwithin{corollary}{section}

\title{Testing Consumer Rationality using\\ Perfect Graphs and Oriented Discs}
\titlerunning{Testing Consumer Rationality}
\toctitle{Testing Consumer Rationality using Perfect Graphs and Oriented Discs}
\author{Shant Boodaghians\inst{1} \and Adrian Vetta\inst{2}}
\institute{Department of Mathematics and Statistics, McGill University\\
\email{shant.boodaghians@mail.mcgill.ca}
\and Department of Mathematics and Statistics, and School of Computer Science, McGill University
\email{vetta@math.mcgill.ca}}

\pagestyle{plain}

\begin{document}

\maketitle

\addtocounter{footnote}{-2}

\begin{abstract}
Given a consumer data-set, the axioms of revealed preference proffer a binary test for 
rational behaviour. A natural (non-binary) measure of the degree of rationality exhibited 
by the consumer is the minimum number of data points whose removal 
induces a rationalisable data-set. We study the computational complexity of the
resultant {\sc consumer rationality} problem in this paper.
This problem is, in the worst case, equivalent (in terms of approximation) to the {\em directed feedback vertex set} problem.
Our main result is to obtain an exact threshold on the number of commodities 
that separates easy cases and hard cases. 
Specifically, for two-commodity markets  
the {\sc consumer rationality} problem is polynomial time solvable;
we prove this via a reduction to the {\em vertex cover} problem on perfect graphs.
For three-commodity markets, however, the problem is NP-complete; we prove this
using a reduction from {\sc planar 3-sat} that is based upon oriented-disc drawings.
\end{abstract}



\section{Introduction}
The theory of revealed preference, introduced by Samuelson \cite{Sam38,Sam48}, 
has long been used in economics to test for rational behaviour. 
Specifically, given a set of $m$ commodities with price vector $\bp$, we wish 
to determine whether the consumer always demands an affordable bundle $\bx$ of 
maximum utility.
To test this question, assume we are given a collection of {\em consumer data} 
$\{(\bp_1,\bx_1), (\bp_2,\bx_2), \dots, (\bp_m,\bx_m)\}$.
Each pair $(\bp_i,\bx_i)$ denotes the fact that the consumer purchased 
the bundle of goods $\bx_i\in \R^n$ when the prices were $\bp_i \in \R^n$.
(Here $\R=\mathbb{R}_{\ge 0}$ denotes the set of non-negative real numbers.)
Now, assuming the consumer is rational, the selection of $\bx_i$ reveals 
information about the consumer's preferences; 
in particular, suppose that ${\bp_i\cdot\bx_i\geq \bp_i\cdot\bx_j}$ for some
$j\neq i$.
This means that the bundle $\bx_j$ was affordable, and available for selection, 
when $\bx_i$ was chosen.
In this case, we say $\bx_i$ is \textit{directly revealed preferred} to 
$\bx_j$ and denote this $\bx_i\succeq\bx_j$. 
Furthermore, suppose we observe that $\bx_i\succeq \bx_j$ and that $\bx_j\succeq\bx_k$.
Then, by transitivity of preference, we say $\bx_i$ is 
\textit{indirectly revealed preferred} to $\bx_k$. 

For clarity of presentation, we will assume that all the chosen bundles are 
distinct and that all revealed preferences are strict (no ties).
For a rational consumer, the data-set should then have the following property:\\

\vspace{1em}
\noindent\textbf{The Generalized Axiom of Revealed Preference.}
\footnote{When ties are possible, this formulation is called the 
\textit{strong axiom of revealed preference}; see Houthakker~\cite{Hou50}. We refer the reader to 
the survey by Varian \cite{Var05} for details concerning the assorted axioms of revealed preference.}\\
If $\vec x_1\succeq \vec x_2$, 
$\vec x_2 \succeq \vec x_3,\, \dotsc,\, \vec x_{k-2}\succeq \vec x_{k-1}$ and $\vec x_{k-1}\succeq \vec x_k$
then $\vec x_k\nsucceq \vec x_1$.
\vspace{1em}

Moreover, Afriat~\cite{Afr67} showed that the Generalized Axiom of Revealed
Preference ({\sc garp}) is also sufficient for the construction of a 
utility function which \textit{rationalises} the data-set. 
That is, Afriat showed that if the consumer data satisfies {\sc garp}
then one can construct a utility function $v:\R^n\to\R$ 
such that $v$ is maximised at $\bx_i$ among the set of affordable 
bundles at prices $\bp_i$. 
Hence, {\sc garp} is a necessary and sufficient condition for 
consumer rationality. 

We can represent the preferences revealed by the consumer data via a 
directed graph, $D_\succeq =(V,A)$. 
This directed {\em revealed preference graph} contains a vertex $\bx_i\in V$ 
for each data-pair $(\bp_i,\bx_i)$, and an arc from $\bx_i$ to $\bx_j$ if and 
only if $\bx_i\succeq \bx_j$. 
Observe that {\sc garp} holds {\em if and only if} the revealed preference graph 
is acyclic. 
Consequently, Afriat's theorem implies that the consumer is rational if and only 
if $D_\succeq$ contains no directed cycles. 

For example, Figure~\ref{fig:pref-example} displays visually two sets of consumer data. 
Each bundle $\bx_i$ is paired with its price vector $\bp_i$, and a dotted line is drawn through $\bx_i$ 
perpendicular to $\bp_i$. Note that $\bp_i\bx_i\geq \bp_i\by$ if and only if $\by$ lies on the 
opposite side of the dotted line to the drawing of $\bp_i$. Hence, for the first consumer (left), we have 
$\bx_3\succeq\bx_2$, $\bx_3\succeq \bx_1$ and $\bx_2 \succeq \bx_1$. This produces an acyclic 
revealed preference graph $D_\succeq$ and, therefore, her behaviour can be rationalized. On the otherhand, the 
second consumer (right) reveals $\bx_3\succeq\bx_2\succeq \bx_3$. This produces a directed $2$-cycle in $D_\succeq$ and, so, 
her behaviour cannot be rationalised.
\begin{figure}[h]
\centering
{
	\begin{tikzpicture}[scale=0.5]
	\clip[draw] (-3,-2) rectangle (6,5);
	\begin{scope}[rotate=-22.5]
		\coordinate (A) at (0,0);
		\coordinate (B) at (2,1.5);
		\coordinate (C) at (2,4);
		\draw[thick, -stealth] (A)-- ++(90:1);
		\draw[thick, -stealth] (B)-- ++(45:1);
		\draw[thick, -stealth] (C)-- ++(90:1);
		\fill[black] (A) circle (0.14);
		\fill[black] (B) circle (0.14);
		\fill[black] (C) circle (0.14);
		\node at (A) [below left] {$\bx_1$};
		\node at (B) [below left] {$\bx_2$};
		\node at (C) [below left] {$\bx_3$};
		\node at ($(A)+(90:1)$) [left] {$\bp_1$};
		\node at ($(B)+(45:1)+(.2,0)$) [below] {$\bp_2$};
		\node at ($(C)+(90:1)$) [left] {$\bp_3$};
		\draw[dashed] (A) ++(-10,0) -- ++(20,0);
		\draw[dashed] (B) ++(135:10)-- ++(-45:20);
		\draw[dashed] (C) ++(-10,0) -- ++(20,0);
	\end{scope}
	\end{tikzpicture}
}
\qquad
{
	\begin{tikzpicture}[scale=0.5]
	\clip[draw] (-3,-2) rectangle (6,5);
	\begin{scope}[rotate=-22.5]
		\coordinate (A) at (-1,-1);
		\coordinate (B) at (3,1.5);
		\coordinate (C) at (-1,3);
		\draw[thick, -stealth] (A)-- ++(90:1);
		\draw[thick, -stealth] (B)-- ++(45:1);
		\draw[thick, -stealth] (C)-- ++(90:1);
		\fill[black] (A) circle (0.14);
		\fill[black] (B) circle (0.14);
		\fill[black] (C) circle (0.14);
		\node at (A) [below left] {$\bx_1$};
		\node at (B) [below left] {$\bx_2$};
		\node at (C) [below left] {$\bx_3$};
		\node at ($(A)+(90:1)$) [left] {$\bp_1$};
		\node at ($(B)+(45:1)+(.2,0)$) [below] {$\bp_2$};
		\node at ($(C)+(90:1)$) [left] {$\bp_3$};
		\draw[dashed] (A) ++(-10,0) -- ++(20,0);
		\draw[dashed] (B) ++(135:10)-- ++(-45:20);
		\draw[dashed] (C) ++(-10,0) -- ++(20,0);
	\end{scope}
	\end{tikzpicture}
}
\caption{A rational consumer and an irrational consumer.}
\label{fig:pref-example}
\end{figure}
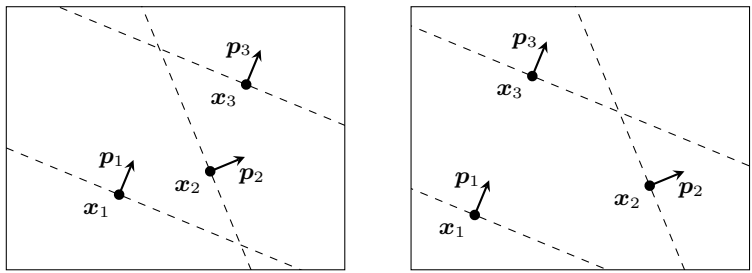 

\vspace*{-2em}

\subsection{A Measure of Consumer Rationality}
We have seen that graph acyclicity can be used to provide a test for 
consumer rationality.
However such a test is binary and, in practice, leads  to the 
immediate conclusion of irrationality, as observed data sets typically induce cycles in 
the revealed preference graph.
Consequently, there has been a large body of experimental and
theoretical work designed to measure how close to rational
the behaviour of a consumer is. 
Examples include measurements based upon best-fit perturbation errors 
($e.g.$ Afriat~\cite{Afr73} and Varian~\cite{Var90}),
measurements based upon counting the number of rationality violations 
present in the data 
($e.g.$ Swofford and Whitney~\cite{SW87} and Famulari~\cite{Fam95}), 
and measurements based upon the maximum size of a {\em rational} subset of 
the data ($e.g.$ Koo~\cite{Koo63} and Houtman and Maks~\cite{HM85}).
Gross~\cite{Gro95} provides a review and analysis of some of these measures.
Recently new measures have been designed by Echenique et al.~\cite{ELS11}, 
Apesteguia and Ballester~\cite{AB15}, and Dean and Martin~\cite{DM15}.

Combinatorially, perhaps the most natural measure is simply to count the 
number of ``irrational" purchases. 
That is, what is the minimum number of data-points whose removal induces a
rational set of data? The associated decision problem is called the 
{\sc consumer rationality} problem.

\vspace{1em}
\noindent\textbf{CONSUMER RATIONALITY}\\
\textbf{Instance:} Consumer data 
	$(\bp_1,\bx_1),\,\dotsc,\,(\bp_m,\bx_m)\in\R^n\times\R^n$, 
and an integer $k$.\\
\textbf{Problem:} \parbox[t]{10.4cm}{Is there a sub-collection of at 
most $k$ data points whose removal produces a data set satisfying {\sc garp}?}
\vspace{1em}

We note that this {\sc consumer rationality} problem is dual to the 
measure of Houtman and Maks~\cite{HM85}.
Using the graphical representation, it can be seen that the consumer 
rationality problem is a special case of the {\sc directed feedback vertex 
set} problem. 
In fact, as we explain in Section~\ref{sec:many}, when there are many goods,
the two problems are equivalent.
However, the consumer rationality problem becomes easier to 
approximate as the number of commodities falls. Indeed, the main contribution 
of this paper is to obtain an exact threshold on the number of commodities 
that separates easy cases (polynomial) and hard cases (NP-complete). 
In particular, we prove the problem is polytime solvable for a two-commodity
market (Section \ref{2comm}), but that it is NP-complete for a three-commodity
market (Section \ref{3comm}).

\section{The General Case: Many Commodities}\label{sec:many}
In this section we show that the {\sc consumer rationality} problem in full 
generality is computationally equivalent to the {\sc directed feedback vertex 
set} ({\sc dfvs}) problem.

\vspace{1em}
\noindent\textbf{DIRECTED FEEDBACK VERTEX SET}\\
\textbf{Instance:} A directed graph $D=(V,A)$, and an integer $k$.\\
\textbf{Problem:} \parbox[t]{10.4cm}{Is there a set $S$ of at most $k$ 
vertices such that the induced subgraph $D[V\setminus S]$ is acyclic?
(Such a set $S$ is called a \textit{feedback vertex set}.)}
\vspace{1em}

First, observe that the {\sc consumer rationality problem} is a special case 
of the {\sc directed feedback vertex set problem}:
we have seen that the dataset is rationalizable if and only if the preference 
graph is acyclic.
Thus, the minimum feedback vertex set
in the preference graph $D_\succeq$ clearly corresponds to the minimum
number of data points that must be removed to create a rationalizable data-set.

On the other hand, provided the number of commodities is large, {\sc dfvs} 
is a special case of the {\sc consumer rationality problem}.
Specifically, Deb and Pai~\cite{DP14} show that for any directed graph $D$ 
there is a data-set on $m=n$ commodities whose preference graph is $D_\succeq 
= D$; for completeness, we include the short proof of this result.

\begin{lemma}\label{lem:many}\cite{DP14} 
	Given sufficiently many commodities, we may construct any digraph 
	as a preference graph.
\end{lemma}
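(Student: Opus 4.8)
The plan is to realise each vertex of $D$ as a data point in a market with one commodity per vertex, encoding the adjacency structure of $D$ directly into the price vectors. Concretely, if $D=(V,A)$ has $N$ vertices, I would take $n=N$ commodities and $m=N$ data points, and set the bundle purchased at data point $i$ to be the $i$-th standard unit vector, $\bx_i=\vec e_i$. With this choice the expenditure at $i$ is $\bp_i\cdot\bx_i=p_{i,i}$, while the cost of an alternative bundle $\bx_j$ is $\bp_i\cdot\bx_j=p_{i,j}$, where $p_{i,j}$ denotes the $j$-th coordinate of $\bp_i$. Thus a single scalar $p_{i,j}$ entirely controls whether the arc $i\to j$ appears: we have $\bx_i\succeq\bx_j$ precisely when $p_{i,i}\ge p_{i,j}$.

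Given this reduction to the choice of a price matrix $(p_{i,j})$, the construction is immediate. For each $i$ I would fix the diagonal entry $p_{i,i}=1$, and then set $p_{i,j}=0$ whenever $(i,j)\in A$ and $p_{i,j}=2$ whenever $(i,j)\notin A$ (for $j\neq i$). Then $p_{i,i}=1\ge p_{i,j}$ holds if and only if $(i,j)\in A$, so the revealed-preference graph $D_\succeq$ has exactly the arc set $A$; that is, $D_\succeq=D$. Since all prices lie in $\{0,1,2\}$ and all bundles are unit vectors, the data indeed lies in $\R^n\times\R^n$.

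The only point requiring care is the paper's standing convention that revealed preferences be strict, with no ties. This is automatically satisfied here: when $(i,j)\in A$ we obtain the strict inequality $\bp_i\cdot\bx_i=1>0=\bp_i\cdot\bx_j$, and when $(i,j)\notin A$ we obtain $1<2$, so the boundary case $\bp_i\cdot\bx_i=\bp_i\cdot\bx_j$ never arises. If one additionally insists on strictly positive prices (as is sometimes demanded for the economic interpretation), I would simply replace the values $0,1,2$ by $\varepsilon,1,2$ for a small $\varepsilon>0$, and the same inequalities persist. I do not anticipate any genuine obstacle, since the construction is explicit and the verification is a one-line check per ordered pair $(i,j)$; the substance of the lemma is really the observation that one price coordinate per potential arc suffices to switch that arc on or off independently of all the others.
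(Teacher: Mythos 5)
Your proposal is correct and is essentially the paper's own argument with the roles of prices and bundles swapped: the paper takes $\bp_i=\vec e_i$ and encodes the adjacency of $D$ in the coordinates of the bundles using the same values $\{0,1,2\}$, whereas you take $\bx_i=\vec e_i$ and encode it in the prices. The key idea -- one commodity per vertex so that a single coordinate switches each potential arc on or off independently, with the $1$ vs.\ $\{0,2\}$ gap avoiding ties -- is identical, so this counts as the same approach.
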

\begin{proof}
	Let $D$ be any digraph on $n$ nodes. We will construct $n$ pairs 
	in $\R^n\times\R^n$ such that $D_\succeq\cong D$. 
	Denote $\bp^i=(p^i_1,\,\dotsc,\,p^i_n)$, and set $p^i_i=1$, $p^i_j=0$ 
	for $j\neq i$. 
	Similarly, denote $\bx^i=(x^i_1,\,\dotsc,\,x^i_n)$, and set 
	$x^i_j = 1$ if $i=j$, 
	0 if $(i,j)\in D$, 
	and 2 if $(i,j)\notin D$.	
	We then have, $\bp_i\cdot \bx_i = 1$, $\bp_i\cdot \bx_j=0$ if we 
	want an arc from $i$ to $j$, and $\bp_i\cdot \bx_j=2$ if we do not 
	want an arc, as desired. \qed
\end{proof}

It follows that any lower and upper bounds on approximation for
(the optimization version of) {\sc dfvs} immediately apply to 
(the optimization version of) the {\sc consumer rationality} problem. 
The exact hardness of approximation for {\sc dfvs} is not known.
The best upper bound is due to Seymour~\cite{Sey95} who gave
an $O(\log n \log\log n)$ approximation algorithm.
With respect to lower bounds, the {\sc directed feedback vertex set} problem 
is NP-complete~\cite{Kar72}.
Furthermore, as we will see in Section~\ref{2comm}, the 
{\sc consumer rationality} problem is at least as hard to approximate as 
{\sc vertex cover}. 
It follows that {\sc dfvs} problem cannot be approximated to within a 
factor $1.36$ \cite{DS05} unless $P=NP$.
Also, assuming the Unique Games Conjecture \cite{Kho02}, the minimum directed feedback vertex set 
cannot be approximated to within any constant factor \cite{GHM11,Sve12}.

Lemma \ref{lem:many} shows the equivalence with {\sc directed feedback 
vertex set} applies when the number of commodities is at least the size of 
the data-set.
However, Deb and Pai~\cite{DP14} also show that for an \mbox{$m$-commodity} 
market, there exists a directed graph on $O(2^m)$ vertices that cannot be 
realised as a preference graph. 
This suggests that the hardness of the consumer rationality problem may vary 
with the quantity of goods. Indeed, we now prove that this is the case.



\section{The Case of Two Commodities} \label{2comm}
We begin by outlining the basic approach to proving polynomial solvability for two goods.
As described, the {\sc consumer rationality} problem is a special case of {\sc dvfs}.
For two goods, however, rather than considering all 
directed cycles, it is sufficient to find a vertex hitting set for the set of 
{\em digons} (directed cycles consisting of two arcs). 
The resulting problem can be solved by finding a minimum vertex cover in a 
corresponding auxiliary undirected graph. 
The {\sc vertex cover} problem is, of course, itself hard \cite{DS05}.
But we prove that the auxiliary undirected graph is perfect, and 
{\sc vertex cover} is polytime solvable in perfect graphs.

\subsection{Two-Commodity Markets and the Vertex Cover Problem}
So, our first step is to show that it suffices to hit only digons.
Specifically, we prove that every vertex-minimal cycle in the revealed 
preference graph $D_{\succeq}$ is a digon. 
This fact corresponds to the result that for two goods the 
\textit{Weak Axiom of Revealed Preference} is equivalent to the 
\textit{Generalised Axiom of Revealed Preference}. This equivalence was noted by
Samuelson~\cite{Sam48} and formally proven by Rose~\cite{Ros58} in 1958; for a 
recent structurally motivated proof see~\cite{Heu14}.
For completeness, and to illustrate some of the notation and techniques required 
in this paper, we present a short geometric proof here.

We begin with the required notation. Let ${\bx=(x_1,x_2)\in\R^2}$, and define 
\[
	{\bx^\se := \{(y_1,y_2)\in\R^2:y_1\geq x_1,\,y_2\leq x_2\}}\enspace,
\] 
$i.e.$ the points which lie ``below and to the right'' of $\bx$ in the plane.
Define $\bx^\nw$, $\bx^\ne$ and $\bx^\sw$ similarly. In addition, define $\bx^\sse$ 
$\bx^\snw$, $\bx^\sne$ and $\bx^\ssw$ by replacing the inequalities with strict
inequalities. 
Furthermore, if $\ell$ is a line in the plane of non-positive slope which 
intersects the positive quadrant, we say a point \textit{lies below} 
$\ell$ if it lies in the same closed half-plane as the origin. For each data pair $(\bp_i, \bx_i)$, we define 
$\ell_i$ to be the line through $\bx_i$ perpendicular to $\bp_i$.
Hence, in our setting $\bx_i\succeq \bx_j$ if and only if 
$\bx_j$ lies below $\ell_i$. Note that, if $\bx_i\succeq\bx_j$, then we may 
not have $\bx_j\in \bx_i^\sne$ since $\bp_i$ is non-negative.

\begin{lemma}\label{lem:warp} \cite{Ros58}
For two commodities, every minimal cycle is a digon.
\end{lemma}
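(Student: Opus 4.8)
The plan is to argue by contradiction: assume the revealed preference graph contains a vertex-minimal directed cycle $C:\bx_1\succeq\bx_2\succeq\cdots\succeq\bx_k\succeq\bx_1$ with $k\ge 3$, and from it exhibit a strictly shorter cycle. The first thing I would extract from minimality is that $C$ is chordless in a strong sense: there can be no arc $\bx_i\succeq\bx_j$ between cyclically non-consecutive vertices (it would splice in a shorter cycle), and in particular no reverse arc between consecutive vertices (that would be a digon, already shorter than $C$). Hence the only preferences holding among $\{\bx_1,\dots,\bx_k\}$ are the $k$ arcs of $C$ itself. Throughout I would lean on two geometric primitives: that $\bx_i\succeq\bx_j$ iff $\bx_j$ lies below $\ell_i$, and the monotonicity observation that, since each $\bp_i\ge 0$, moving a point to the southwest only decreases $\bp_i\cdot(\cdot)$; so if $\by$ lies weakly to the southwest of a point that lies below a non-positive-slope line, then $\by$ lies below that line too.

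The key structural step I would prove next is that the vertices of $C$ form an \emph{antichain} in the coordinate order: no $\bx_j$ lies weakly to the southwest of another $\bx_i$. Indeed, if $\bx_j$ were weakly in $\bx_i^\sw$, then from the predecessor arc $\bx_{i-1}\succeq\bx_i$ (so $\bx_i$ is below $\ell_{i-1}$) together with the monotonicity primitive I would obtain $\bx_j$ below $\ell_{i-1}$, that is $\bx_{i-1}\succeq\bx_j$; for $j\notin\{i-1,i\}$ this is exactly a forbidden chord, and the remaining index $j=i-1$ is excluded because $\bx_i\notin\bx_{i-1}^\sne$ forces the incident coordinates to differ strictly (here I would dispose of the zero-price boundary cases by hand). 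Consequently, ordering the vertices by first coordinate puts them on a strictly decreasing staircase, and every arc of $C$ must point strictly to the southeast or strictly to the northwest.

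With this staircase picture in place, I would finish by a slope comparison. Writing $m_i$ for the magnitude of the slope of the budget line $\ell_i$, and $s_i$ for that of the arc segment $\bx_i\bx_{i+1}$, each arc translates into inequalities between these: a southeast arc forces $m_i\le s_i$ and, by the no-digon condition at its head, $m_{i+1}<s_i$, while a northwest arc forces $m_i\ge s_i$ and $m_{i+1}>s_i$. At each vertex where $C$ turns from a southeast run to a northwest run, or vice versa, these combine to order the two incident arc slopes; together with the facts that the arc vectors sum to zero and that every intermediate arc slope is a mediant of its neighbours, the induced ordering of the $s_i$ around $C$ should be cyclically inconsistent. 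The triangle is the guiding instance: with arcs of slopes $s_1,s_2$ (southeast) and $s_3$ (northwest, a mediant of $s_1$ and $s_2$), the head and tail budget-line constraints yield simultaneously $s_2<s_1$ and $s_1<s_2$.

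I expect this last step to be the main obstacle. The turn-by-turn slope inequalities are individually satisfiable, and consecutive arcs of the same direction are a priori unconstrained, so a purely local argument does not close the cycle; the contradiction has to be produced globally, presumably by tracking an extremal arc (one of minimum slope) or an extremal vertex (the rightmost one, whose incoming arc is southeast and outgoing arc northwest) and propagating the forced inequalities across the southeast and northwest runs using the mediant relation. Making this propagation watertight for an arbitrary zigzag turn pattern, rather than the single-run triangle case, together with the degenerate configurations in which some price coordinate vanishes, is where the real work lies.
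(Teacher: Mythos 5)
Your first two steps are sound and broadly in the spirit of the paper's own proof, which likewise anchors the argument at an extremal (leftmost) vertex: chordlessness is exactly what the paper extracts from minimality, and your antichain observation is correct under the paper's standing no-ties assumption (the degenerate case $j=i-1$ that worries you collapses to $\bp_{i-1}\cdot\bx_{i-1}>\bp_{i-1}\cdot\bx_{i-1}$ and so cannot occur). The genuine gap is the third step, which you candidly flag as incomplete; the difficulty is that it is not merely unfinished but cannot be finished with the inequalities you have placed on the table. The per-arc constraints (for a southeast arc $\bx_i\to\bx_{i+1}$ of segment-slope magnitude $s_i$: $m_i<s_i$ and $m_{i+1}<s_i$; reversed for a northwest arc), together with the staircase structure and the closure of the polygon, are simultaneously satisfiable. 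For example, the staircase $A=(10,130)$, $B=(11,120)$, $C=(111,20)$, $D=(112,10)$ traversed as $A\to B\to C\to D\to A$ with $m_A=m_D=2$ and $m_B=m_C=\tfrac12$ satisfies every one of those inequalities; what kills it is only the chord $\bx_B\succeq\bx_D$ (the segment $BD$ has slope magnitude $\tfrac{110}{101}>m_B$, and raising $m_B$ above that is incompatible with $m_B<s_{BC}=1$). Chords between non-consecutive vertices are constraints your final step never invokes, so the global contradiction you are hoping to extract from the stated system does not exist.

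The missing idea, then, is that chordlessness must be re-used inside the propagation, not only once to derive the antichain. This is precisely what the paper's induction does: starting from the leftmost vertex it maintains the invariant ``$\bx_{j+1}\in\bx_j^\se$ and $\ell_j$ is steeper than $\ell_{j+1}$,'' and at each step it uses both that $\bx_{j+2}$ lies below $\ell_{j+1}$ and that it does \emph{not} lie below $\ell_j$ (no chord) to force $\bx_{j+2}\in\bx_{j+1}^\se$, and then uses the fact that $\bx_{j+3}$ must lie above both $\ell_j$ and $\ell_{j+1}$ yet below $\ell_{j+2}$ to force the steepness to keep decreasing. The invariant therefore survives all the way around the cycle and contradicts the choice of leftmost vertex; in particular the cycle never turns northwest at all, so the ``arbitrary zigzag turn pattern'' you anticipate having to control simply does not arise. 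If you augment your system with, for each $i$ and each non-successor $j$, the inequality expressing ``$\bx_j$ is not below $\ell_i$,'' your extremal-vertex propagation can be made to work, but at that point you are essentially reconstructing the paper's argument.
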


\begin{proof}
Let $C_k=\{\bx_0,\bx_1, \dots ,\bx_{k-1}\}$, listed in order, be a 
vertex-minimal directed cycle in $D_{\succeq}$. 
Suppose, for a contradiction, that $k\ge 3$. 
By minimality, the cycle $C_k$ is chordless, therefore, $\bx_i\succeq \bx_j$ 
if and only if $j=i+1\pmod k$. 
(Henceforth, we will often assume without statement that indices are taken modulo $k$. 
Furthermore, ``left'' will stand for the 
negative $x$ direction.) Without loss of generality, suppose $\bx_i$ is the 
leftmost bundle -- or one of them. Since $\bx_i\succeq \bx_{i+1}$, we have 
that $\bx_{i+1}$ must fall in $\bx_i^\se$. We claim that $\ell_i$ must be
steeper than $\ell_{i+1}$. To see this, suppose this is not true. Then, as 
shown in Figure~\ref{fig:2-comm-digons}(a), $\ell_{i+1}$ must intersect the line 
$\ell_{i}$ strictly to the left of $\bx_i$. 
If not, $\bx_{i+1}\succeq \bx_{i}$. Now $\bx_{i+2}$ lies under $\ell_{i+1}$ but not under $\ell_i$, but this 
implies that $\bx_{i+2}$ lies strictly to the left of $\bx_i$ as illustrated.
This gives the desired contradiction. 
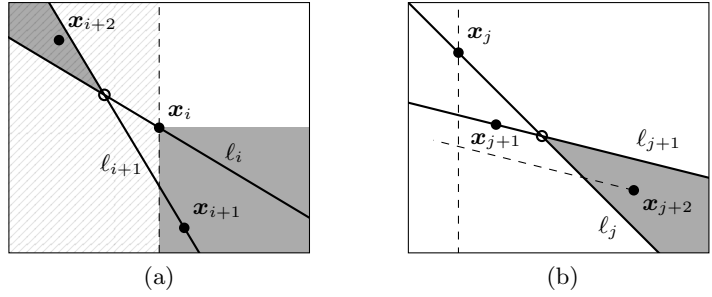
\begin{figure}[h]
\centering
\subfloat[][]
{\centering
	\begin{tikzpicture}[scale=0.333]
		\clip[draw] (-6,-5) rectangle (6,5);
		
		\node at (0,0) [above right] {$\bx_i$};
		\node at (1,-4) [above right] {$\bx_{i+1}$};
		\node at (-4, 3.5) [above right] {$\bx_{i+2}$};
		
		\node at (3,-1) {$\ell_i$};
		\node at (-1.5,-1.5) {$\ell_{i+1}$};
		
		\fill[black] (0,0) circle (0.21);
		\fill[black] (1,-4) circle (0.21);
		\fill[black] (-4,3.5) circle (0.21);
		
		\fill[black, opacity=0.33] (0,0) rectangle (10,-5);
		\fill[black, opacity=0.33, pattern=north east lines] 
			(0,-5) rectangle (-10,5);
		
		\fill[opacity=0.33, black] (-6,6) -- (-10,6) -- (-2.1875,1.3125) 
			-- (-5,6) -- cycle;
	
		\draw[thick] (-10,6)--(10,-6);
		\draw[thick] (4,-9)--(-5,6);
		\draw[dashed] (0,-5)--(0,5);
		
		\draw[thick] (-2.1875,1.3125) circle (0.21);
	\end{tikzpicture}}\qquad\qquad
\subfloat[][]{\centering
	\begin{tikzpicture}[scale=0.333]
		\clip[draw] (-2,-8) rectangle (10,2);
		
		\fill[opacity=0.33, black] (15,-15) -- (3.333,-3.333) 
			-- (16,-6.5) -- (16,-15) -- cycle;
	
		\draw[thick] (-5,5)--(15,-15);
		\draw[thick] (-6,-1)--(14,-6);
		\draw[dashed] (0,-10)--(0,5);
		\draw[dashed] (7,-5.5)--(-1,-3.5);
		
		\draw[thick] (3.333,-3.333) circle (0.20);
		
		\node at (0,0) [above right] {$\bx_j$};
		\node at (1.5,-2.875) [below] {$\bx_{j+1}$};
		\node at (7,-5.5) [below right] {$\bx_{j+2}$};
		
		\fill[black] (0,0) circle (0.21);
		\fill[black] (1.5,-2.875) circle (0.21);
		\fill[black] (7,-5.5) circle (0.21);
		
		\node at (8,-3.5) {$\ell_{j+1}$};
		\node at (6,-7) {$\ell_{j}$};
	\end{tikzpicture}

}
	\caption{Leftmost 2-commodity bundles on a cycle.}
	\label{fig:2-comm-digons}
\end{figure}
Hence, $\ell_{i}$ must be steeper 
than $\ell_{i+1}$. This situation is illustrated in Figure~\ref{fig:2-comm-digons}(b)
where we set $j=i$. We claim the following:
\begin{claim} 
	Suppose $\bx_{j+1}\in\bx_j^\searrow$ and $\ell_{j}$ is steeper than 	
	$\ell_{j+1}$, then we must have that $\bx_{j+2}\in\bx_{j+1}^\searrow$ 
	and that $\ell_{j+1}$ is steeper than $\ell_{j+2}$.
\end{claim} 
As shown in Figure~\ref{fig:2-comm-digons}(b), because $\ell_{j}$ is steeper than $\ell_{j+1}$, we must 
have $\bx_{j+2}\in\bx_{j+1}^\searrow$. 
It remains to show that $\ell_{j+1}$ is steeper than $\ell_{j+2}$. 
Suppose not, then, since $\bx_{j+1}$ must fall above $\ell_{j+2}$, 
the (highlighted) point where $\ell_{j}$ and $\ell_{j+1}$ meet must also 
fall above $\ell_{j+2}$. 
Thus, the region which falls above both $\ell_j$ and $\ell_{j+1}$ cannot 
intersect the region below $\ell_{j+2}$. Therefore,
there is no valid position for $\bx_{j+3}$. 
Consequently, $\ell_{j+1}$ must be steeper than $\ell_{j+2}$, as desired.

Hence, by induction, for every $0\leq j\leq k-1$, we have that $\ell_j$ is 
steeper than $\ell_{j+1}$ and that $\bx_{j+1}\in\bx_j^\searrow$, 
where our base case is $j=i$.
However, this cannot hold for $j=i-1$, since $\bx_i$ is the leftmost 
point in the cycle, amounting to a contradiction, 
and refuting the assumption that there existed a minimal cycle on 
at least 3 vertices.\qed
\end{proof}

Lemma~\ref{lem:warp} implies that a vertex set that intersects every digon 
will also intersect each directed cycle of any length. 
Hence, to solve the {\sc consumer rationality} problem for two goods, 
it suffices to find a minimum cardinality hitting vertex set for the digons 
of $D_\succeq$. 
We can do this by transforming the problem into one of finding a minimum 
vertex cover in an undirected graph. Recall the {\sc vertex cover} problem is: 

\vspace{1.5em}
\noindent\textbf{VERTEX COVER}\\
\textsc{instance:} Given an undirected graph $G=(V,E)$ and an integer~$k$.\\
\textsc{problem:} \parbox[t]{10.4cm}{Is there a set $S$ of at most $k$ 
vertices such that every edge has an endpoint in $S$?}
\vspace{1em}

The transformation is then as follows: given the directed revealed preference 
graph $D_\succeq$ we create an {\em auxiliary undirected graph} $G_\succeq$. 
The vertex set $V(G_\succeq)=V(D_\succeq)$ so the undirected graph also has 
a vertex for each bundle $\bx_i$.
There is an edge $(\bx_i,\bx_j)$ in $G_\succeq$  if and only if $\bx_i$ and 
$\bx_j$ induce a digon in $D_\succeq$. 
It is easy to verify that a vertex cover in $G_\succeq$ corresponds to 
a hitting set for digons of $D_\succeq$. 

Let's see some simple examples for the auxiliary graph $G_\succeq$.
First consider Figure \ref{fig:2-comm-constr}(a), where bundles are 
placed on a concave curve.
Now every pair of vertices $\bx_i$ and $\bx_j$ induce a digon in $D_\succeq$.
Thus $G_\succeq$ is an undirected clique. 
Now consider Figure \ref{fig:2-comm-constr}(b).
The vertices on the left induce a directed path in $D_\succeq$; 
the vertices along the bottom also induce a directed path in $D_\succeq$. 
However each pair consisting of one vertex on the left and one vertex on
the bottom induce a digon in $D_\succeq$. Thus $G_\succeq$ is a 
complete bipartite graph.

\begin{figure}[h]
	\centering
	\subfloat[][A Complete Graph]{
		\begin{tikzpicture}
			\clip (0,3.25) rectangle (3.25,0);
			\foreach \r in {0,1,...,5} {
			\begin{scope}[rotate=-15*\r - 7.5]
				\fill[black] (0,3) circle (0.07);
				\draw[thick] (-5,3)--(5,3);
			\end{scope}
			}
	\end{tikzpicture}
	} \qquad\quad\quad \subfloat[][A Complete Bipartite Graph]{
		\begin{tikzpicture}[scale=0.5]
			\clip (-3,6) rectangle (7,-0.5);
			\foreach \s in {1,2,...,5} {
				\fill[black] (\s,0) circle (0.14);
				\fill[black] (-.25*\s,\s) circle (0.14);
				
				\draw[thick] (\s+0.25,-1)--(\s-1.75,7);
				\draw[thick] (-1-0.25*\s,\s)--(6.333-0.25*\s,\s);
			}
	\end{tikzpicture}
%
}
\caption{Examples of the Auxiliary Undirected Graph.}
\label{fig:2-comm-constr}
\end{figure}
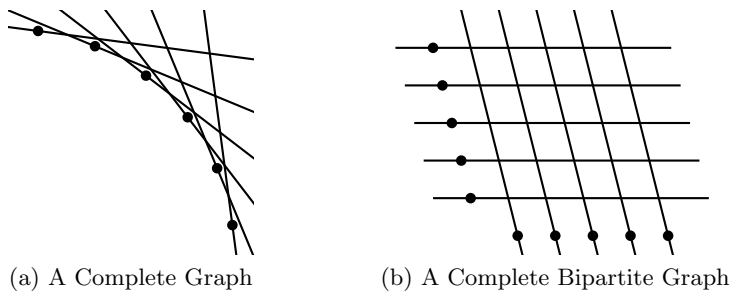 

\subsection{Perfect Graphs}
An undirected graph $G$ is perfect if the chromatic number of any induced
subgraph is equal to the cardinality of the maximum clique in the subgraph. 
In 1961, Berge~\cite{Ber61} made the famous conjecture that an undirected 
graph is perfect if and only if it contains neither an odd length hole 
nor an odd length antihole.
Here a {\em hole} is a chordless cycle with at least four vertices.  
An {\em antihole} is the complement of a chordless cycle with at least four vertices.
Berge's conjecture was finally proven by Chudnovsky, Robertson, Seymour and Thomas \cite{CRST06} in 2006.

\begin{theorem}[The Strong Perfect Graph Theorem \cite{CRST06}]
An undirected graph is perfect if and only if it contains no odd holes 
and no odd antiholes.
\end{theorem}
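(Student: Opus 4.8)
The statement is the Strong Perfect Graph Theorem, whose full proof by Chudnovsky, Robertson, Seymour and Thomas runs to well over a hundred pages; I do not expect a shorter route, so the plan is only to lay out its skeleton. Call a graph \emph{Berge} if it contains no odd hole and no odd antihole. The easy direction is that every perfect graph is Berge. Here I would use that perfection is hereditary --- directly from the definition, every induced subgraph of a perfect graph is perfect --- so it suffices to certify that odd holes and odd antiholes are themselves imperfect. An odd hole $C_{2t+1}$ with $t\ge 2$ is chordless of length at least five, hence triangle-free, so $\omega=2$, while as an odd cycle it needs $\chi=3$. Dually, its complement (a generic odd antihole) has $\omega=\alpha(C_{2t+1})=t$ but $\chi=t+1$, since covering $2t+1$ vertices by cliques of the cycle (edges and single vertices) requires at least $t+1$ of them. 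Both are therefore imperfect, and as an imperfect induced subgraph rules out perfection, a perfect graph can contain neither.

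The hard direction --- every Berge graph is perfect --- is the substance of the theorem, and I would attack it by a vertex-minimal-counterexample argument resting on a structural decomposition theorem. The target is to show that every Berge graph either is \emph{basic} or admits one of a short list of decompositions. The basic classes to isolate are bipartite graphs, line graphs of bipartite graphs, the complements of these two, and double split graphs, each of which is directly seen to be perfect. The permitted decompositions are a $2$-join in $G$ or in $\overline G$, a homogeneous pair, and a balanced skew partition. Granting this structure theorem, I would take a Berge graph $G$ that is minimal (in vertex count) among counterexamples to perfection and eliminate every case: $G$ is not basic, since basic graphs are perfect; $G$ has no $2$-join, since a $2$-join decomposition preserves perfection (the smaller blocks are Berge, hence perfect by minimality, and their colourings recombine --- this is Cornu\'ejols--Cunningham); $G$ has no homogeneous pair, by the theorem of Chv\'atal--Sbihi that a minimal imperfect graph has none; and $G$ has no balanced skew partition, which is Chv\'atal's skew-partition conjecture specialised to a minimal imperfect graph. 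With all alternatives excluded, no minimal counterexample can exist, so every Berge graph is perfect.

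The main obstacle --- where essentially all of the difficulty lives --- is the structure theorem itself. Its proof is a long, branching induction that first detects whether $G$ contains one of a few induced configurations, most notably a suitably attached \emph{prism} (two triangles joined by three disjoint paths) or a long line-graph-like gadget, forcing a $2$-join or a skew partition; and, when no such configuration is present, analyses special vertices and connected strips to force one of the basic types. Among the pieces I would expect to be most delicate is ruling out the balanced skew partition in the minimal counterexample, since --- unlike the $2$-join --- a skew partition does not visibly preserve perfection and must instead be excluded through the skew-partition argument, historically the last and hardest ingredient of the programme. Given the scale of this argument, for the purposes of the present paper I would invoke the theorem as cited in \cite{CRST06} rather than reconstruct it.
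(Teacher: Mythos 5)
The paper gives no proof of this theorem at all --- it is stated purely as a cited result from \cite{CRST06} and then applied --- and your proposal likewise ends by invoking that citation, so the approaches coincide. Your skeleton of the CRST argument (the easy direction via $\omega$ and $\chi$ of odd holes and antiholes, and the hard direction via the decomposition theorem and a minimal-counterexample analysis) is an accurate summary of the actual proof.
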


There are many important classes of perfect graphs, for example, cliques, 
bipartitie graphs, chordal graphs, line graphs of bipartite graphs, and
comparability graphs.\footnote{
	By the (Weak) Perfect Graph Theorem \cite{Lov72}, the complements of 
	these classes of graphs are also perfect.} 
Interestingly, we now show that the class of 2D auxiliary revealed 
preference graphs are also perfect.
To prove this, we will need the following geometric lemma, but first, we
introduce the required notation. 


\begin{lemma}\label{lem:induced path}
Let $\{x_i, x_j, x_k\}$, listed in order, be an induced path in the 
2D auxiliary revealed preference graph $G_\succeq$.
If $\bx_i\in \bx_j^\nw$ then $\bx_k\in\bx_j^\nw$. 
(Similarly, if  $\bx_i\in \bx_j^\se$ then $\bx_k\in\bx_j^\se$.) 
\end{lemma}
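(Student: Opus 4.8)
The plan is to reduce both the digon relation and the northwest/southeast positioning to a single one-dimensional quantity---the steepness of the relevant lines---and then close the argument with an elementary slope-averaging inequality.

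First I would set up a clean geometric dictionary for digons. Since prices are non-negative, each line $\ell_a$ has non-positive slope; write $s_a\ge 0$ for its steepness (the magnitude of its slope), and, for a point $\bx_b\in\bx_a^{\se}$, write $m_{ab}>0$ for the steepness of the segment $\bx_a\bx_b$. A one-line computation from $\bp_a\cdot\bx_b<\bp_a\cdot\bx_a$ shows that $\bx_a\succeq\bx_b$ holds exactly when $s_a<m_{ab}$, and symmetrically $\bx_b\succeq\bx_a$ holds exactly when $m_{ab}<s_b$. Hence, whenever $\bx_a\in\bx_b^{\nw}$, the pair $\{\bx_a,\bx_b\}$ forms a digon precisely when $s_a<m_{ab}<s_b$. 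I would also record the observation made just before Lemma~\ref{lem:warp}: the two endpoints of any digon must be in strict $\nw$/$\se$ position, never $\ne$/$\sw$, because the southwest point cannot be revealed-preferred to the northeast one.

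Now take the induced path $\bx_i,\bx_j,\bx_k$ with $\bx_i\in\bx_j^{\nw}$, and suppose for contradiction that $\bx_k\notin\bx_j^{\nw}$. As $\bx_j\bx_k$ is a digon, the positioning remark forces $\bx_k\in\bx_j^{\se}$. Applying the dictionary to the two edges gives $s_i<m_{ij}<s_j$ and $s_j<m_{jk}<s_k$, so in particular $m_{ij}<m_{jk}$. Geometrically $\bx_i$ lies $\nw$ of $\bx_j$, which lies $\nw$ of $\bx_k$, so $\bx_i\in\bx_k^{\nw}$ too, and the steepness $m_{ik}$ of the long segment $\bx_i\bx_k$ is a strictly-positive-weighted average of $m_{ij}$ and $m_{jk}$, whence $m_{ij}<m_{ik}<m_{jk}$. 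Chaining these yields $s_i<m_{ij}<m_{ik}<m_{jk}<s_k$, i.e.\ $s_i<m_{ik}<s_k$, which by the dictionary says that $\{\bx_i,\bx_k\}$ is itself a digon. That edge is a chord of the path, contradicting that the path is induced, so $\bx_k\in\bx_j^{\nw}$. The $\se$ version follows by reflecting the plane across the line $y=x$, which interchanges $\nw$ and $\se$ while preserving non-negativity of prices, the lines $\ell_i$, and every revealed preference.

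The only substantive step is the slope-averaging claim: the slope of $\bx_i\bx_k$ lies strictly between those of $\bx_i\bx_j$ and $\bx_j\bx_k$. This is immediate upon writing the slope of $\bx_i\bx_k$ as $\tfrac{\Delta x^{(1)}\,\mathrm{slope}_{ij}+\Delta x^{(2)}\,\mathrm{slope}_{jk}}{\Delta x^{(1)}+\Delta x^{(2)}}$, where $\Delta x^{(1)},\Delta x^{(2)}>0$ are the two horizontal gaps; since all three slopes are negative, the same bound holds for the magnitudes $m_{ij},m_{ik},m_{jk}$. I expect the main thing to get right to be not this calculation but the bookkeeping of strict inequalities and the reduction of the hypothesis to the steepness characterization of digons; the degenerate cases (a shared coordinate, or a vertical or horizontal $\ell$) I would dismiss by observing they cannot arise in a digon, again because prices are non-negative.
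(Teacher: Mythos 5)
Your proof is correct, but it takes a genuinely different (and case-free) route to the same conclusion. The paper argues geometrically about the chord $\ell_{i,k}$ through $\bx_i$ and $\bx_k$ and splits into two cases: if $\bx_j$ lies strictly below that chord, then $\ell_j$ would have to separate $\bx_i$ from $\bx_k$, so $\bx_j$ cannot be revealed preferred to both and one of the two digons fails; if $\bx_j$ lies on or above it, then $\bx_i\succeq\bx_j$ forces $\bx_i\succeq\bx_k$ and symmetrically $\bx_k\succeq\bx_i$, producing the forbidden chord. Your ``dictionary'' --- a digon between a north-west point $\bx_a$ and a south-east point $\bx_b$ exists iff $s_a<m_{ab}<s_b$ --- is essentially the steepness comparison the paper already uses in proving Lemma~\ref{lem:warp}, and deploying it here collapses the two cases into the single chain $s_i<m_{ij}<s_j<m_{jk}<s_k$ plus the weighted-average bound $m_{ij}<m_{ik}<m_{jk}$: the paper's ``below the chord'' case is precisely the configuration $m_{ij}>m_{jk}$, which your chain rules out automatically before concluding that $\{\bx_i,\bx_k\}$ must itself be a digon. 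What your version buys is a purely algebraic, picture-free argument; what it costs is care with conventions, and there one claim is slightly off: a vertical $\ell$ at the south-east endpoint of a digon, or a horizontal $\ell$ at the north-west endpoint, \emph{can} occur (only the reverse combinations are excluded by non-negativity of prices). This is harmless if you let steepness take values in $[0,\infty]$, since $s_a<m_{ab}$ already forces $s_a$ finite and $m_{ab}<s_b$ forces $s_b>0$. Your observation that the endpoints of a digon cannot share a coordinate is correct and is exactly what makes both horizontal gaps positive, so the averaging inequality is strict as needed.
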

\begin{proof}
Recall the assumption that the bundles distinct, that is, $\bx_i\neq\bx_j$ for all $i\neq j$.
Because $\{x_i, x_j\}$ is an edge in the auxiliary undirected graph $G_\succeq$, 
we know that ${\bx_i\succeq\bx_j}$ and ${\bx_j\succeq\bx_i}$. 
Therefore it cannot be the case that $\bx_i\in \bx_j^\sne$ or 
${\bx_j\in\bx_i^\sne}$. 
Thus, either $\bx_j\in\bx_i^\nw$ or $\bx_j\in\bx_i^\se$, but not both.
Similarly, because $\{x_j, x_k\}$ is an edge in $G_\succeq$, either 
$\bx_k\in\bx_j^\nw$ or $\bx_k\in\bx_j^\se$.

Now, without loss of generality, let $\bx_i\in\bx_j^\nw$. 
For a contradiction, assume that $\bx_k\in\bx_j^\se$.
Hence, we have ${\bx_j\in\bx_i^\se\cap\bx_k^\nw}$. 
Suppose $\bx_j$ lies strictly below the line~$\ell_{i,k}$ through 
$\bx_i$ and $\bx_k$. 
But then we cannot have both $\bx_j\succeq \bx_i$ and $\bx_j\succeq\bx_k$.
This is because the line $\ell_j$ must cross the segment of $\ell_{i,k}$
between $\bx_i$ and $\bx_k$ if it is to induce either of the two preferences.
Thus, the line $\ell_j$ separates $\bx_i$ and $\bx_k$ and, so, at most one of bundles can lie below the line.
This is illustrated in Figure~\ref{fig:2-comm-induced}(a).

	 \begin{figure}[h]
	\centering
	\subfloat[][$\bx_j$ below $\ell_{i,k}$]{
		\begin{tikzpicture}[scale=0.5]
			\node at (0,4) [above right] {$\bx_i$};
			\node at (2,1) [below left] {$\bx_j$};
			\node at (5,0) [above right] {$\bx_k$};
			
			\node at (-.5,1.5) {$\ell_j$};			
			\node at (3,-.5) {$\ell_j'$};
			\node at (-1,4) {$\ell_{i,k}$};		
			
			\draw[dashed] (-1,4.8)--(6,-0.8);
			\fill[black] (0,4) circle (0.14);
			\fill[black] (5,0) circle (0.14);	
			\fill[black] (2,1) circle (0.14);	

			\draw[thick] (2.5,-1)--(1,5);
			\draw[thick] (-1,1)--(6,1);
			\fill[black, opacity=0.25] (0,0)--(0,4)--(5,0)--cycle;
	\end{tikzpicture}
	} \qquad\quad\quad \subfloat[][$\bx_j$ above or on $\ell_{i,k}$]{
		\begin{tikzpicture}[scale=0.5]
			\node at (0,4) [below left] {$\bx_i$};
			\node at (3,2.5) [left] {$\bx_j$};
			\node at (5,0) [below left] {$\bx_k$};
			
			\node at (6,3) {$\ell_i$};
			\node at (3.5,4.5) {$\ell_k$};
			\node at (3,1) {$\ell_{i,k}$};
			
			\draw[dashed] (-1,4.8)--(6,-0.8);
			\fill[black] (0,4) circle (0.14);
			\fill[black] (5,0) circle (0.14);	
			\fill[black] (3,2.5) circle (0.14);	

			\fill[black, opacity=0.25] (5,4)--(0,4)--(5,0)--cycle;
			\draw[thick] (5.5,-1) -- (2.5,5);
			\draw[thick] (-1,4.25) -- (6,2.5);
	\end{tikzpicture}
}
	\caption{Induced path on three vertices.}
	\label{fig:2-comm-induced}
\end{figure}
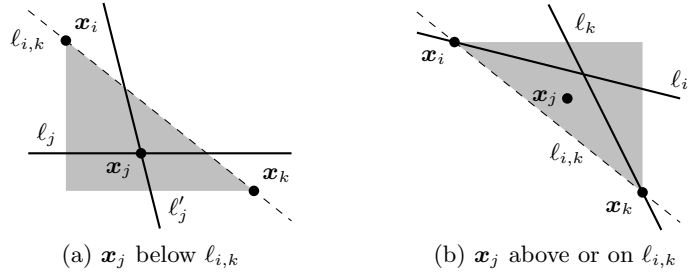 

On the other hand, suppose $\bx_j$ lies on or above the line~$\ell_{i,k}$ 
through $\bx_i$ and $\bx_k$.
Now we know that $\bx_i\succeq\bx_j$. This implies that $\bx_i\succeq \bx_k$, 
as illustrated in Figure~\ref{fig:2-comm-induced}(b).
Furthermore, we know that $\bx_k\succeq\bx_j$ which implies 
that $\bx_k\succeq\bx_i$. 
Thus $\{x_i, x_k\}$ is an edge in $G_\succeq$. 
This contradicts the fact that $\{x_i, x_j, x_k\}$ is an induced path.
\qed
\end{proof}

\begin{lemma}\label{lem:hole}
The 2D auxiliary revealed preference graph $G_\succeq$ contains no
odd holes on at least 5 vertices.
\end{lemma}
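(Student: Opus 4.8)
The plan is to derive a contradiction from the orientation structure that Lemma~\ref{lem:induced path} imposes along a hole. Suppose $G_\succeq$ contained an odd hole and, relabelling the bundles, write it as $\bx_0,\bx_1,\dots,\bx_{2\ell}$ listed cyclically with indices read modulo $2\ell+1$, where $2\ell+1\ge 5$. First I would fix an orientation for each edge of the cycle. As in the proof of Lemma~\ref{lem:induced path}, every edge $\{\bx_s,\bx_t\}$ of $G_\succeq$ is a digon, so $\bx_t\notin\bx_s^\sne$ and $\bx_s\notin\bx_t^\sne$; together with distinctness of the bundles this means that exactly one of $\bx_t\in\bx_s^\nw$ or $\bx_t\in\bx_s^\se$ holds. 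Accordingly I set $\sigma_t=\nw$ if $\bx_{t+1}\in\bx_t^\nw$, and $\sigma_t=\se$ if $\bx_{t+1}\in\bx_t^\se$.

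Next I would invoke Lemma~\ref{lem:induced path} at each vertex. Because the hole is chordless and has at least five vertices, $\bx_{t-1}$ and $\bx_{t+1}$ are distinct and non-adjacent, so every consecutive triple $\bx_{t-1},\bx_t,\bx_{t+1}$ is an induced path on three vertices. Taking $\bx_t$ as the middle vertex, the lemma forces $\bx_{t-1}$ and $\bx_{t+1}$ onto the same side of $\bx_t$: either both in $\bx_t^\nw$ or both in $\bx_t^\se$. In the first case $\bx_{t+1}\in\bx_t^\nw$ gives $\sigma_t=\nw$, while $\bx_{t-1}\in\bx_t^\nw$ is the same as $\bx_t\in\bx_{t-1}^\se$, i.e.\ $\sigma_{t-1}=\se$; the second case is symmetric. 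Hence $\sigma_{t-1}\neq\sigma_t$ for every $t$.

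The main obstacle is precisely this translation of Lemma~\ref{lem:induced path} into the statement ``the orientation flips at every vertex''; once that is in hand, the conclusion is a one-line parity count. The relations $\sigma_{t-1}\neq\sigma_t$ for all $t$ say that the two-valued sequence $\sigma_0,\sigma_1,\dots,\sigma_{2\ell}$ alternates all the way around the cycle, that is, it is a proper $2$-colouring of the cyclic arrangement of $2\ell+1$ edges. No such colouring of an odd cycle exists, and this contradiction rules out any odd hole on at least five vertices in $G_\succeq$.
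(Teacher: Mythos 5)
Your proof is correct and follows essentially the same route as the paper: both apply Lemma~\ref{lem:induced path} to every consecutive triple of the hole and then derive a parity contradiction around the odd cycle. The only cosmetic difference is that you record the alternation as a labelling of the edges ($\sigma_t\in\{\nw,\se\}$) that must flip at every vertex, whereas the paper two-colours the vertices and then observes that two adjacent vertices of the same colour would contradict the distinctness of the bundles.
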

\begin{proof} 
Take a hole $C_k=\{\bx_0,\bx_1,\dots, \bx_{k-1}\}$, listed in order, 
where $k\ge 5$ is odd. 
For any $0\le i\le k-1$, the three vertices $\{\bx_{i-1},\bx_i, \bx_{i+1}\}$ 
induce a path in $G_\succeq$. 
Consequently, by Lemma~\ref{lem:induced path}, 
either both $\bx_{i-1}$ and $\bx_{i+1}$ are in $\bx_i^\nwarrow$ or 
 both $\bx_{i-1}$ and $\bx_{i+1}$ are in $\bx_i^\searrow$.
In the former case, colour $\bx_i$ yellow. 
In the latter case, colour $\bx_i$ red.
Thus we obtain a $2$-coloring of $C_k$. 
Since $k$ is odd, there must be two adjacent vertices, 
$\bx_i$ and $\bx_{i+1}$, with the same colour. 
Without loss of generality, let both vertices be yellow. 
Thus, $\bx_{i+1}$ is $\bx_i^\nwarrow$ and $\bx_i$ is in $\bx_{i+1}^\nwarrow$. 
This contradicts the distinctness of $\bx_i$ and $\bx_{i+1}$.
\qed
\end{proof}

We remark that the parity condition in Lemma \ref{lem:hole} is necessary. 
To see this consider the example in Figure~\ref{fig:2-comm-C6} which 
produces an even hole on six vertices. 
Specifically, the only mutually adjacent pairs are the $(\bx_i,\bx_{i+1})$ pairs, 
with indices taken modulo~6. 

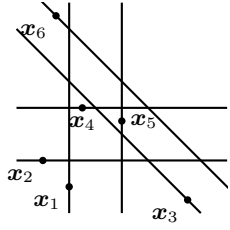
\begin{figure}[h]
	\centering
	\begin{tikzpicture}[scale=0.35]
		\node at (1,0) [below left] {$\bx_1$};
		\node at (0,1) [below left] {$\bx_2$};
		\node at (5.5,-.5) [below left] {$\bx_3$};
		\node at (1.5,3) [below] {$\bx_4$};
		\node at (3,2.5) [right] {$\bx_5$};
		\node at (0.5,6.5) [below left] {$\bx_6$};

		\clip (-1,7) rectangle (7,-1);
		\fill[black] (1,0) circle (0.14);
		\fill[black] (0,1) circle (0.14);
		\fill[black] (5.5,-0.5) circle (0.14);
		\fill[black] (1.5,3) circle (0.14);
		\fill[black] (3,2.5) circle (0.14);
		\fill[black] (0.5,6.5) circle (0.14);
		
		\draw[thick] (1,-1)--(1,7);
		\draw[thick] (3,-1)--(3,7);
		\draw[thick] (-1,1)--(7,1);
		\draw[thick] (-1,3)--(7,3);
		\draw[thick] (-1,6)--(6,-1);
		\draw[thick] (-1,8)--(8,-1);
	\end{tikzpicture}
\caption{Construction for $C_6$}
\label{fig:2-comm-C6}
\end{figure} 

\newpage
\begin{lemma}\label{lem:antihole}
The 2D auxiliary revealed preference graph $G_\succeq$ contains no
antiholes on at least 5 vertices.
\end{lemma}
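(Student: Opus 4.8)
The plan is to turn the local constraint of Lemma~\ref{lem:induced path} into a global one: I will argue that every antihole living inside $G_\succeq$ would be forced to be bipartite, whereas antiholes on at least five vertices are never bipartite. Recall first that adjacency in $G_\succeq$ forces a NW/SE relationship between the two bundles: as established in the proof of Lemma~\ref{lem:induced path}, for an edge $\{\bx_i,\bx_j\}$ exactly one of $\bx_j\in\bx_i^\se$ or $\bx_j\in\bx_i^\nw$ holds (and these are mutually exclusive even when a coordinate is tied). In this language, Lemma~\ref{lem:induced path} says exactly that along any induced path $\{\bx_a,\bx_b,\bx_c\}$ the two endpoints $\bx_a,\bx_c$ lie on the \emph{same} side of the middle vertex $\bx_b$ (both in $\bx_b^\nw$ or both in $\bx_b^\se$).

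Next I would suppose, for contradiction, that $G_\succeq$ contains as an induced subgraph an antihole $A=\overline{C_k}$ on vertices $\bx_0,\dots,\bx_{k-1}$ with $k\ge 5$, so that $\bx_i$ is adjacent to $\bx_j$ precisely when $i$ and $j$ are non-consecutive modulo $k$. The neighbourhood of $\bx_i$ in $A$ is then the contiguous arc $\{\bx_{i+2},\dots,\bx_{i-2}\}$. The key observation is that any two cycle-consecutive neighbours $\bx_{i+j}$ and $\bx_{i+j+1}$ are non-adjacent in $A$ (being consecutive), so $\{\bx_{i+j},\bx_i,\bx_{i+j+1}\}$ is an induced path in $G_\succeq$ with middle $\bx_i$. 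By the reformulation above, $\bx_{i+j}$ and $\bx_{i+j+1}$ lie on the same side of $\bx_i$; chaining this equality along the whole arc of neighbours forces \emph{all} neighbours of $\bx_i$ to lie on a single side of $\bx_i$. I can therefore colour each vertex by its type: ``SE'' if all its neighbours lie in its $\se$-region, ``NW'' if they all lie in its $\nw$-region.

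Finally I would check this colouring is a proper $2$-colouring. If $\{\bx_i,\bx_j\}$ is an edge and $\bx_i$ has type SE, then $\bx_j\in\bx_i^\se$, hence $\bx_i\in\bx_j^\nw$; so $\bx_j$ has a neighbour (namely $\bx_i$) lying to its NW, which forces $\bx_j$ to be of type NW. Thus adjacent vertices always receive opposite types, and $A$ would be bipartite. But $\overline{C_k}$ is never bipartite for $k\ge 5$: when $k=5$ it is the odd cycle $C_5$ (already excluded as an odd hole by Lemma~\ref{lem:hole}), and when $k\ge 6$ it contains the triangle $\{\bx_0,\bx_2,\bx_4\}$, since these three indices are pairwise non-consecutive modulo $k$. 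This contradiction rules out antiholes on at least five vertices.

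I expect the only delicate part to be the bookkeeping behind the ``all neighbours on one side'' claim, namely verifying that the antihole-neighbourhood of each $\bx_i$ really is a single arc and that consecutive neighbours genuinely form induced $P_3$'s with $\bx_i$ as the middle vertex; once that is pinned down, the propagation of the side-label and the bipartite/odd-cycle contradiction are immediate.
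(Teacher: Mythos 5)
Your proposal is correct and follows essentially the same route as the paper's proof: the heart of both arguments is chaining Lemma~\ref{lem:induced path} along the contiguous neighbourhood arc $\{\bx_{i+2},\dots,\bx_{i-2}\}$ to show every vertex's antihole-neighbours lie entirely in its $\nw$- or entirely in its $\se$-region, and both then derive the contradiction from the triangle $\{\bx_0,\bx_2,\bx_4\}$ (with $k=5$ deferred to Lemma~\ref{lem:hole}). The only cosmetic difference is the finish: you observe that the side-type is a proper $2$-colouring and invoke non-bipartiteness of $\overline{C_k}$, whereas the paper reaches the same contradiction directly by concluding $\bx_4\in\bx_2^\se$ and $\bx_2\in\bx_4^\se$, violating distinctness.
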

\begin{proof}
Note that the complement of an odd hole on five vertices is also an odd hole.
Thus, by Lemma \ref{lem:hole}, the graph $G_\succeq$ may not contain an
antihole on five vertices.  

Next consider an antihole $\bar{C}_k=\{\bx_0,\bx_1,\dots, \bx_{k-1}\}$, 
listed in order, with ${k\geq 6}$. 
The neighbours in $\bar{C}_k$ of $\bx_i$, for any $0\le i\le k-1$, 
are $\Gamma_i=\{\bx_{i+2},\bx_{i+3},\dots, \bx_{i-2}\}$.
We claim that either every vertex of $\Gamma_i$ is in $\bx_i^\nwarrow$ or
every vertex of $\Gamma_i$ is in $\bx_i^\searrow$. 
To see this note that $(\bx_{i+2}, \bx_{i+3})$ is not an edge, and 
therefore $\{\bx_{i+2},\bx_{i}, \bx_{i+3}\}$ is an induced path in $G_\succeq$. 
By Lemma~\ref{lem:induced path}, without loss of generality, 
both $\bx_{i+2}$ and $\bx_{i+3}$ are in $\bx_i^\nwarrow$. 
But $\{\bx_{i+3},\bx_{i}, \bx_{i+4}\}$ is also an induced path in $G_\succeq$. 
Consequently, as $\bx_{i+3}$ is in $\bx_i^\nwarrow$, 
Lemma~\ref{lem:induced path} implies that $\bx_{i+4}$ is in $\bx_i^\nwarrow$. 
Repeating this argument through to the induced path 
$\{\bx_{i-3},\bx_{i}, \bx_{i-2}\}$ gives the claim.

Now consider the three vertices $\bx_0, \bx_2$ and $\bx_4$. 
Since $k\ge 6$ these vertices are pairwise adjacent in $\bar{C}_k$. 
Without loss of generality, by the claim, $\Gamma_0$ is in $\bx_0^\nwarrow$. 
Thus, $\bx_2$ and $\bx_4$ are in $\bx_0^\nwarrow$.
However $\bx_0$ is in $\Gamma_2 \cap \Gamma_4$. 
Thus every vertex in $\Gamma_2$ is in $\bx_2^\searrow$ 
and every vertex in  $\Gamma_2$ is in $\bx_4^\searrow$. 
Hence, $\bx_4$ is in $\bx_2^\searrow$ and $\bx_2$ is in $\bx_4^\searrow$, 
a contradiction.
\qed
\end{proof}

Lemmas \ref{lem:hole} and \ref{lem:antihole} together show, 
by applying the Strong Perfect Graph Theorem,
that the auxiliary undirected graph is perfect.
\begin{theorem}\label{thm:perfect}
The 2D auxiliary revealed preference graph $G_\succeq$ is perfect.
\qed 
\end{theorem}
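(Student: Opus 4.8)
The plan is to invoke the Strong Perfect Graph Theorem and verify that the two structural lemmas just established rule out precisely its two forbidden configurations. Recall that this theorem characterises perfect graphs as exactly those containing no odd holes and no odd antiholes. Since a hole is by definition a chordless cycle on at least four vertices, an \emph{odd} hole is a chordless cycle on at least five vertices. Thus every possible odd hole falls under the hypothesis of Lemma~\ref{lem:hole}, which asserts that $G_\succeq$ contains no odd hole on five or more vertices; this immediately disposes of the first half of the characterisation.

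For the second half, note that an odd antihole is the complement $\bar{C}_k$ of a chordless cycle with $k$ odd, and since holes require at least four vertices, such an antihole likewise has $k \ge 5$ vertices. Here Lemma~\ref{lem:antihole} gives more than is strictly needed: it forbids \emph{all} antiholes on at least five vertices, not merely the odd ones. In particular, $G_\succeq$ contains no odd antihole. (The small edge case $k=5$, where a five-antihole coincides with a five-hole, is already absorbed by the opening observation in the proof of Lemma~\ref{lem:antihole}.)

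Combining these two facts, $G_\succeq$ contains neither an odd hole nor an odd antihole, so by the Strong Perfect Graph Theorem it is perfect, and the statement follows as a direct corollary. The only point requiring any care is the bookkeeping that both ``odd hole'' and ``odd antihole'' force at least five vertices, so that the $\ge 5$ hypotheses of Lemmas~\ref{lem:hole} and~\ref{lem:antihole} lose nothing against the theorem's requirements. I expect no genuine obstacle at this stage: the substance of the result lives entirely in those two lemmas and in the underlying geometric Lemma~\ref{lem:induced path}, while the present deduction is merely the assembly step.
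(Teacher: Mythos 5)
Your proposal is correct and matches the paper's own argument exactly: the theorem is obtained by combining Lemma~\ref{lem:hole} and Lemma~\ref{lem:antihole} with the Strong Perfect Graph Theorem, and your bookkeeping that odd holes and odd antiholes necessarily have at least five vertices is the only detail needed. No differences worth noting.
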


\subsection{A Polynomial Time Algorithm}
In classical work, Gr\"otschel, Lov\'asz and Schrijver~\cite{GLS84,GLS88} 
show that the {\sc vertex cover} problem in a perfect graph
can be solved in polynomial time via the ellipsoid method.

\begin{theorem}\cite{GLS84}
The {\sc vertex cover} problem is solvable in polynomial time on a perfect graph. 
\qed
\end{theorem}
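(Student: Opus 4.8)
The plan is to solve vertex cover through its complement. A set $S \subseteq V$ is a vertex cover of $G$ if and only if $V \setminus S$ is an independent (stable) set, so a minimum vertex cover has size $n - \alpha(G)$, where $\alpha(G)$ is the independence number. Hence it suffices to compute a maximum independent set in polynomial time, or equivalently to maximise the all-ones objective $\mathbf{1}^\top x$ over the stable set polytope $\mathrm{STAB}(G)$, the convex hull of incidence vectors of stable sets. Since the instances produced by our reduction are unweighted, the linear objective $\mathbf{1}^\top x$ suffices, though the argument goes through verbatim for arbitrary non-negative vertex weights.

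First I would recall the polyhedral characterisation of perfectness. One always has $\mathrm{STAB}(G) \subseteq \mathrm{QSTAB}(G)$, where $\mathrm{QSTAB}(G) = \{x \geq 0 : \sum_{v \in K} x_v \leq 1 \text{ for every clique } K\}$ is the clique-constrained relaxation, and a theorem of Chv\'atal, Fulkerson and Lov\'asz states that $G$ is perfect if and only if these two polytopes coincide. Optimising directly over $\mathrm{QSTAB}(G)$ is problematic, however: it has exponentially many clique inequalities, and its separation problem is maximum-weight clique, which is NP-hard in general.

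The key device is the Lov\'asz theta body $\mathrm{TH}(G)$, which is squeezed between the two, $\mathrm{STAB}(G) \subseteq \mathrm{TH}(G) \subseteq \mathrm{QSTAB}(G)$, and which therefore collapses onto $\mathrm{STAB}(G)$ exactly when $G$ is perfect. Unlike $\mathrm{QSTAB}(G)$, the body $\mathrm{TH}(G)$ admits a semidefinite description, so membership can be tested and a separating hyperplane produced for a given rational point in polynomial time by solving a semidefinite feasibility problem. I would then invoke the Gr\"otschel--Lov\'asz--Schrijver equivalence between separation and optimisation: given a polynomial-time (weak) separation oracle for a convex body, the ellipsoid method optimises any linear objective over it in polynomial time. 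Applied to $\mathrm{TH}(G) = \mathrm{STAB}(G)$, this yields a maximum-weight stable set, and hence, by complementation, a minimum vertex cover.

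The delicate point---and where the real work of \cite{GLS84} lies---is the passage from the \emph{approximate} optimisation delivered by the ellipsoid method over a semidefinitely-described body to an \emph{exact}, integral answer. Because $\mathrm{TH}(G)$ is in general not polyhedral, the separation oracle is only a weak numerical one, so one first obtains a near-optimal point rather than an exact vertex. For perfect graphs, however, $\mathrm{TH}(G)$ equals the integral polytope $\mathrm{STAB}(G)$, whose vertices are $0/1$ vectors, and this integrality lets one round the near-optimal solution to the true optimum and certify its value, since $\alpha(G)$ is an integer. I expect the management of this precision-and-rounding step, together with checking that the weak separation oracle for $\mathrm{TH}(G)$ meets the technical hypotheses of the ellipsoid framework, to be the main obstacle; by contrast, the combinatorial reduction to independent set and the sandwiching of $\mathrm{TH}(G)$ are comparatively routine.
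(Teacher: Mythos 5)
The paper does not prove this statement at all---it is quoted as a black-box result of Gr\"otschel, Lov\'asz and Schrijver, with the \emph{qed} symbol marking it as cited rather than argued. Your sketch correctly reconstructs the standard argument from that source (complementation to maximum stable set, the sandwich $\mathrm{STAB}(G)\subseteq\mathrm{TH}(G)\subseteq\mathrm{QSTAB}(G)$ collapsing for perfect graphs, and weak separation over $\mathrm{TH}(G)$ plus the ellipsoid method, with integrality of $\alpha(G)$ handling the precision issue), so it is consistent with, and indeed fills in, what the paper merely invokes; the only point left implicit is that recovering the optimal \emph{set} (not just its size) uses the standard self-reducibility trick of deleting vertices and re-solving.
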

But by Theorem~\ref{thm:perfect}, the auxiliary undirected graph is perfect. 
Since the consumer rationality problem for two commodities corresponds 
to a vertex cover problem on this auxiliary undirected graph, we have:
\begin{theorem}\label{thm:polytime}
In a two-commodity market, the {\sc consumer rationality} problem 
is solvable in polynomial time. \qed
\end{theorem}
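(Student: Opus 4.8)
The plan is to assemble the structural results already established into a polynomial-time algorithm, since the genuinely difficult ingredient—perfectness of the auxiliary graph—is already in hand as Theorem~\ref{thm:perfect}. First I would argue that the {\sc consumer rationality} problem on two commodities reduces \emph{exactly} to {\sc vertex cover} on the auxiliary graph $G_\succeq$. The crucial logical step uses Lemma~\ref{lem:warp}: every vertex-minimal directed cycle in $D_\succeq$ is a digon. Hence if a vertex set $S$ meets every digon, then $D_\succeq[V\setminus S]$ contains no digon, and therefore no vertex-minimal cycle; since every directed cycle contains a vertex-minimal one, the residual graph is acyclic. Conversely, any set whose removal yields an acyclic graph must in particular destroy every digon. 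Thus the minimum number of data points to remove equals the minimum cardinality of a digon-hitting set in $D_\succeq$.

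Next I would invoke the definition of $G_\succeq$ to convert this into a vertex cover question. By construction its edges are precisely the digon pairs, so a vertex set is a digon-hitting set in $D_\succeq$ if and only if it is a vertex cover in $G_\succeq$. The optimal answer to the {\sc consumer rationality} instance therefore coincides with the size of a minimum vertex cover of $G_\succeq$, and a certifying removal set is recovered directly from an optimal cover.

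The final step chains together the two imported theorems. By Theorem~\ref{thm:perfect} the graph $G_\succeq$ is perfect, and by the Gr\"otschel--Lov\'asz--Schrijver result minimum vertex cover on a perfect graph is solvable in polynomial time. It remains only to check that $G_\succeq$ is itself computable in polynomial time: there are $O(m^2)$ candidate edges, and testing whether a pair $\{\bx_i,\bx_j\}$ induces a digon amounts to evaluating the two inner products $\bp_i\cdot\bx_j$ and $\bp_j\cdot\bx_i$, each an $O(n)$ operation with $n=2$. Constructing $G_\succeq$ and then solving vertex cover on it therefore runs in polynomial time overall, which yields the theorem.

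I expect no serious obstacle, as the substantive content lies upstream in Lemmas~\ref{lem:hole} and~\ref{lem:antihole} and hence in Theorem~\ref{thm:perfect}. The single point that genuinely deserves care is the equivalence between hitting digons and achieving acyclicity: this is exactly where Lemma~\ref{lem:warp} is indispensable, and it is precisely the place where the whole approach breaks down for more than two commodities, since there a vertex-minimal cycle can be strictly longer than a digon and no reduction to undirected {\sc vertex cover} is available.
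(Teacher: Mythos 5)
Your proposal is correct and follows exactly the paper's own route: Lemma~\ref{lem:warp} reduces acyclicity to digon-hitting, the auxiliary graph $G_\succeq$ turns that into {\sc vertex cover}, and Theorem~\ref{thm:perfect} together with the Gr\"otschel--Lov\'asz--Schrijver algorithm finishes the argument. Your added remark on the polynomial-time constructibility of $G_\succeq$ is a small but welcome explicit check that the paper leaves implicit.
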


\section{The Case of Three Commodities} \label{3comm}
We have shown that for two commodities, 
the consumer rationality problem can be solved in polynomial time. 
We now prove the problem is NP-complete if there are three (or more) commodities
by presenting a reduction from {\sc planar 3-sat}. 
The proof has three parts:
first we transform an instance of {\sc planar 3-sat} to an instance of 
{\sc vertex cover} in an associated undirected {\em gadget graph}. 
Second, we show that a vertex cover in the gadget graph corresponds
to a directed feedback vertex set in a directed {\em oriented disc graph}.
Finally, we prove that every oriented disc graph corresponds to
a preference graph in a three-commodity market. 
Consequently, we can solve {\sc planar 3-sat} using an algorithm for the 
three-commodity case of the {\sc consumer rationality} problem.

We begin by defining the class of oriented-disc graphs. 
Let $\{\bx_1,\,\dotsc,\,\bx_n\}$ be points in the plane and let 
$\{B_1,\,\dotsc,\,B_n\}$ be closed discs of varying radii such that 
$B_i$ contains $\bx_i$ on its boundary. 
We call this collection of points and discs an \textit{oriented-disc drawing.}
Given a drawing, we construct a directed graph $D=(V,A)$ on the vertex 
set $V=\{\bx_1,\dots, \bx_n\}$.
There is an arc from $\bx_i$ to $\bx_j$ in $D$ if $\bx_j$, $j\neq i$, is 
contained in the disc $B_i$. 
A directed graph that can be built in this manner is called an 
{\em oriented-disc graph}.

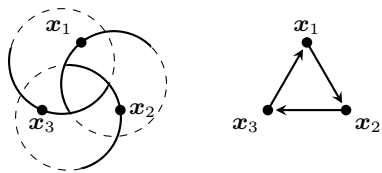
\begin{figure}[h]
\begin{center}\begin{tikzpicture}[scale=0.2]	
	\coordinate (A) at (90:3);
	\coordinate (C) at (210:3);
	\coordinate (B) at (-30:3);

	\node at (A) [above left] {$\bx_1$};
	\node at (B) [right] {$\bx_2$};
	\node at (C) [below] {$\bx_3$};			

	\oriendisc{(A)}{(-30:5)}
	\oriendisc{(B)}{(210:5)}
	\oriendisc{(C)}{(90:5)}
	
	\begin{scope}[shift={(15,0)}]
		\coordinate (A) at (90:3);
		\coordinate (C) at (210:3);
		\coordinate (B) at (-30:3);
		\draw[thick, -stealth] (A)--($(A)!0.9!(B)$);
		\draw[thick, -stealth] (B)--($(B)!0.9!(C)$);
		\draw[thick, -stealth] (C)--($(C)!0.9!(A)$);
		\fill[black] (A) circle (0.35);
		\fill[black] (B) circle (0.35);
		\fill[black] (C) circle (0.35);
		\node at (A) [above] {$\bx_1$};
		\node at (B) [below right] {$\bx_2$};
		\node at (C) [below left] {$\bx_3$};
	\end{scope}
\end{tikzpicture}\end{center}	
\caption{An oriented disc drawing and its corresponding oriented disc graph.}
\label{fig:orien-C3}
\end{figure}

An example is given in Figure~\ref{fig:orien-C3}. 
The oriented-disc drawing is shown on the left and the
the resulting oriented disc graph, a directed cycle on 3 vertices, 
is shown on the right.
(We remark that, for enhanced clarity in the larger figures that follow, the boundary circles are 
drawn half-dotted.) Note that, even if the discs have uniform radii, the resulting oriented-disc 
graphs need not be symmetric -- that is, $(\bx_i, \bx_j)$ can 
be an arc even if $(\bx_j, \bx_i)$ is not. 
This is due to the fact that $\bx_i$ lies on the boundary, not at the centre, 
of its disc $B_i$.
We now start by proving the third part of the reduction: 
every oriented disc graph corresponds to a preference graph in a 
three-commodity market.

\begin{lemma} Every oriented-disc graph corresponds to a preference graph 
induced by consumer data in a three-commodity market.
\end{lemma}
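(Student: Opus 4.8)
The plan is to realise the ``point-in-disc'' predicate as a budget inequality by means of the standard paraboloid lift from computational geometry. Given an oriented-disc drawing with points $\bx_i=(a_i,b_i)$ and discs $B_i$ of centre $c^i=(c^i_1,c^i_2)$ and radius $r_i$, I would send each planar point to its lift $\phi(\bx_i):=(a_i,b_i,a_i^2+b_i^2)\in\mathbb{R}^3$. The point of this map is that a disc, whose defining inequality $(a-c^i_1)^2+(b-c^i_2)^2\le r_i^2$ is quadratic, becomes \emph{linear} in the lifted coordinate: expanding and substituting $z=a^2+b^2$ turns it into the halfspace $z\le 2c^i_1 a+2c^i_2 b-(\|c^i\|^2-r_i^2)$, i.e. the region lying below a fixed plane $P_i$. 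Thus membership in a disc is converted into lying below a plane, which is exactly the form of a revealed-preference (affordability) condition in three commodities.

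Next I would make the correspondence exact. Writing $\bp_i:=(-2c^i_1,-2c^i_2,1)$ for the (upward) normal of $P_i$, the halfspace above reads $\bp_i\cdot\phi(\bx_j)\le -(\|c^i\|^2-r_i^2)$; and since $\bx_i$ lies \emph{on} $\partial B_i$ by hypothesis, the lift $\phi(\bx_i)$ lies exactly on $P_i$, so the right-hand constant equals $\bp_i\cdot\phi(\bx_i)$. Therefore $\bx_j\in B_i\iff \bp_i\cdot\phi(\bx_i)\ge\bp_i\cdot\phi(\bx_j)$, which is precisely $\phi(\bx_i)\succeq\phi(\bx_j)$ for the price vector $\bp_i$ and chosen bundle $\phi(\bx_i)$. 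Hence taking the lifted points as bundles and the $\bp_i$ as prices reproduces the arc set of the oriented-disc graph. I would assume a generic drawing, so that no $\bx_j$ with $j\ne i$ lies exactly on $\partial B_i$; a tiny perturbation of the radii secures strictness and leaves every preference unchanged.

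The main obstacle --- and the only place where the three-commodity economics, as opposed to plain geometry, intervenes --- is the requirement that prices and bundles lie in $\R^3=\mathbb{R}_{\ge0}^3$, whereas $\bp_i$ has two a priori negative entries and the lifts may have negative coordinates. I would resolve this in two independent moves. First, translating the entire drawing (points and discs together) by a large vector $(-M,-M)$ leaves every containment relation intact but forces all centres into the closed third quadrant, making $-2c^i_1,-2c^i_2\ge0$ and hence $\bp_i\ge\mathbf 0$. This translation may push the planar points into negative coordinates, but that is harmless: the preference $\bp_i\cdot\phi(\bx_i)\ge\bp_i\cdot\phi(\bx_j)$ depends only on the \emph{difference} $\phi(\bx_i)-\phi(\bx_j)$, so translating all lifted bundles by a common vector $(S,S,0)$ leaves the entire preference graph unchanged (the bundles need not remain on the paraboloid). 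Choosing $S$ large enough makes every bundle coordinate non-negative, the third coordinate $a_j^2+b_j^2$ being already non-negative. With both prices and bundles now in $\R^3$, the data $\{(\bp_i,\,\phi(\bx_i)+(S,S,0))\}$ is a legitimate three-commodity consumer data-set whose preference graph is the given oriented-disc graph, completing this part of the reduction.
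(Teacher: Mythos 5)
Your proof is correct, but it takes a genuinely different route from the paper. The paper realises the oriented-disc drawing on the unit sphere centred at $(1,1,1)$ via the (circle-preserving) inverse stereographic projection, placing the drawing in a small patch on the ``underside'' of the sphere so that each disc boundary is cut out by a plane whose inward normal is automatically non-negative; that normal is taken as the price vector. You instead use the paraboloid lift $(a,b)\mapsto(a,b,a^2+b^2)$, under which the quadratic disc inequality becomes the linear budget inequality $\bp_i\cdot\phi(\bx_j)\le\bp_i\cdot\phi(\bx_i)$ with the explicit price $\bp_i=(-2c^i_1,-2c^i_2,1)$, and you then enforce non-negativity by two translations: one of the planar drawing (to push centres into the third quadrant, making prices non-negative) and one of the lifted bundles by a common vector (which, as you note, preserves every preference because only differences $\bp_i\cdot(\phi(\bx_i)-\phi(\bx_j))$ matter). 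Both are standard ``lifting'' arguments that turn circle membership into a half-space test; yours is more explicit and self-contained --- the price vectors are given by a formula, no appeal to conformal geometry is needed, and the non-negativity constraint is handled by a transparent algebraic argument rather than by the paper's somewhat informal ``scale and embed in a sufficiently small region'' step. The paper's version is more pictorial and makes the sphere/plane intersection geometry vivid, which it reuses in the figures. One tiny point of agreement worth keeping: your closing remark about perturbing radii to avoid ties matches the paper's standing assumption that all revealed preferences are strict, and is needed since the closed-disc containment otherwise permits boundary coincidences for $j\ne i$.
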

\begin{proof}
Let $D$ be any oriented-disc graph.
We wish to build a three-commodity data set whose preference graph is $D$. 
Recall that the plane is homomorphic to the 2-dimensional sphere minus a point.
Moreover, the inverse of the {\em stereographic projection} is a map from 
the plane to a sphere which preserves the shape of circles; see, for example, \cite{Ear07}. 
This motivates us to attempt to draw the points and discs on the unit 
sphere centered at $(1,1,1)\in\R^3$.
To do this, we scale the oriented-disc drawing appropriately and embed it in a 
small region on the ``underside'' of the sphere, that is, around the point 
where the inwards normal vector is $(1,1,1)$.
An example of this, where the oriented-disc graph is the directed $3$-cycle, is 
shown in Figure \ref{fig:orien-embed}(a). 

We now need to create the corresponding collection of consumer data. 
Let $\{\bx_1,\,\dotsc,\,\bx_n\}$ be the $n$ points of some
oriented-disc drawing of $D$ embedded onto the underside of the sphere. 
Note that the intersection of a sphere and a plane is a circle. 
Furthermore, a plane through a point on the sphere will create a circle 
containing that point.
Thus we may select the $\bx_i$ to be the bundles chosen by the market 
and we may choose $\bp_i$ such that the plane with normal $\bp_i$ that passes
through $\bx_i$ intersects the sphere exactly along the boundary of the embedding 
of the disc $B_i$.
An example is shown in Figure \ref{fig:orien-embed}(b). 
Because $\bp_i$ is non-negative it points into the sphere. 
Therefore, $\bx_i$ is revealed preferred to every point on the inside of the embedding of $B_i$;
it is not revealed preferred to any other point on the sphere.
Hence, the preference graph $D_\succeq$ is isomorphic to the original 
oriented-disc graph, as desired. \qed
\end{proof}

\begin{figure}[h]
	\centering
\subfloat[][]{\centering
	\includegraphics[height=3.5cm]{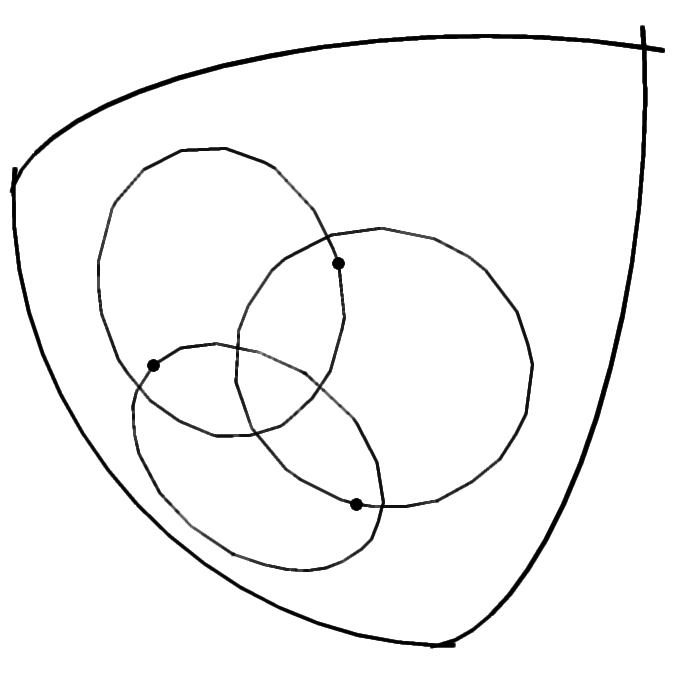}
}\qquad\subfloat[][]{\centering
	\includegraphics[height=3.5cm]{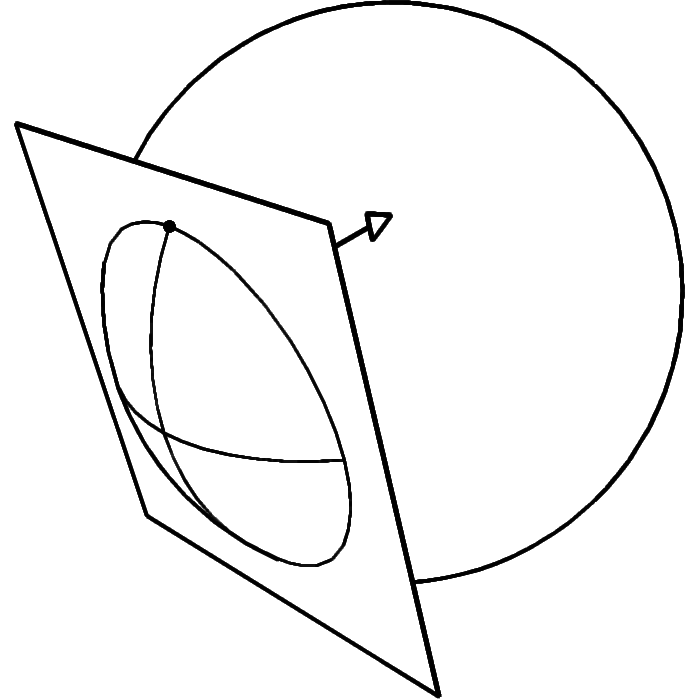}
}
\caption{A 3-cycle embedded on a sphere section, and a disc on a sphere.}
\label{fig:orien-embed}
\end{figure} 

Now, recall the first part of the reduction: 
we wish to transform an instance of {\sc planar 3-sat} to an instance of 
{\sc vertex cover} in an associated undirected \textit{gadget graph}.
Our gadget graph is based upon a network used by Wang and Kuo~\cite{WK88} 
to prove the hardness of {\sc maximum independent set} in undirected 
unit-disc graphs. 
However, we are able to simplify their non-planar network by using an instance of 
{\sc planar 3-sat} rather than the general {\sc 3-sat}. 
This simplification will be useful when implementing the second part of 
the reduction. 

Let $\varphi$ be an instance of {\sc planar 3-sat} with variables 
$u_1,\,\dotsc,\,u_n$ and clauses $C_1,\,\dotsc,\,C_m$. 
Recall that $\varphi$ is \textit{planar} if the bipartite graph 
$H_\varphi$ consisting of a vertex for each variable, a vertex for each 
clause, and edges connecting each clause to its three variables, is planar.
The associated, undirected, \textit{gadget graph} $G_\varphi$ is constructed 
as shown in Figure~\ref{fig:3-sat-graph}.
For each clause $C=(u_i\vee u_j\vee u_k)$, add a $3$-cycle to the graph
whose vertices are labelled by the appropriate literals for the variables 
$u_i$, $u_j$ and $u_k$. 
We call these the \textit{clause gadgets}. For each variable $u_i$, 
add a large cycle of even length whose vertices are alternatingly labelled as 
the literals $u_i$ and $\bar u_i$. We call these the \textit{variable gadgets}. 
Finally, add an edge from each variable in the clause gadgets to some vertex 
on the corresponding variable gadget with the opposite label -- we choose a different variable vertex 
for each clause it is contained in.

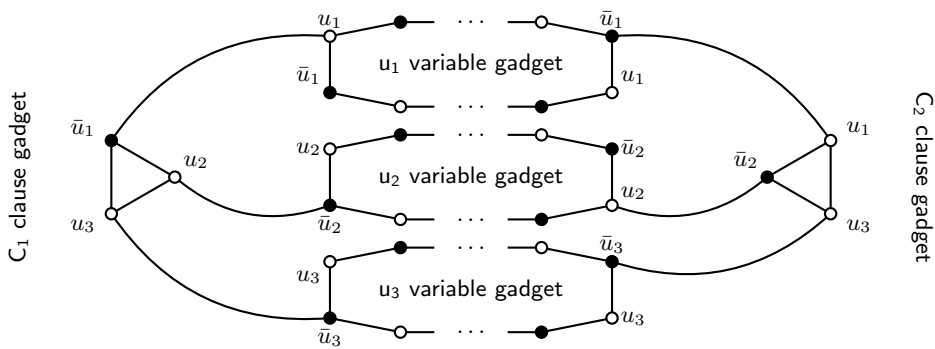
\begin{figure}[ht]
	\centering
	\begin{tikzpicture}[scale=0.75]
		
	\path[thick] ($(-6,.5)+(120:.75)$) edge [bend left=30] ($(0,2)+(-2.5,1)$);
	\path[thick] ($(-6,.5)+(-120:.75)$) edge [bend right=30] ($(0,-2)+(-2.5,0)$);
	\path[thick] ($(-6,.5)+(0:.75)$) edge [bend right=30] ($(0,0)+(-2.5,0)$);
	
	\path[thick] ($(6,.5)+(60:.75)$) edge [bend right=30] ($(0,2)+(2.5,1)$);
	\path[thick] ($(6,.5)+(-60:.75)$) edge [bend left=30] ($(0,-2)+(2.5,1)$);
	\path[thick] ($(6,.5)+(180:.75)$) edge [bend left=30] ($(0,0)+(2.5,0)$);	
	
	\foreach \s in {-2,0,2} {
		\begin{scope}[shift={(0,\s)}]
			\draw[thick] (-.65,1.25)--(-1.25,1.25)--
				(-2.5,1)--(-2.5,0)--(-1.25,-.25)--(-.65,-.25);
			\draw[thick] (.65,1.25)--(1.25,1.25)--
				(2.5,1)--(2.5,0)--(1.25,-.25)--(.65,-.25);
			\draw[thick, fill=black] (-1.25,1.25) circle (0.1);
			\draw[thick, fill=white] (-2.5,1) circle (0.1);
			\draw[thick, fill=black] (-2.5,0) circle (0.1);
			\draw[thick, fill=white] (-1.25,-.25) circle (0.1);
			\draw[thick, fill=white] (1.25,1.25) circle (0.1);
			\draw[thick, fill=black] (2.5,1) circle (0.1);
			\draw[thick, fill=white] (2.5,0) circle (0.1);
			\draw[thick, fill=black] (1.25,-.25) circle (0.1);
			\node at (0,-.25) {$\ldots$};
			\node at (0,1.25) {$\ldots$};
		\end{scope}
	}
	
	\begin{scope}[shift={(-6,0.5)}]
		\draw[thick] (0:.75)--(-120:.75)--(120:.75)--cycle;
		
		\draw[thick, fill=white] (0:.75) circle (0.1);
		\draw[thick, fill=white] (-120:.75) circle (0.1);
		\draw[thick, fill=black] (120:.75) circle (0.1);
		
		\node at (120:1) [left] {$\bar u_1$};
		\node at (0:.75) [above right] {$u_2$};
		\node at (-120:1) [left] {$u_3$};
	\end{scope}
	
	\begin{scope}[shift={(6,0.5)}]
		\draw[thick] (180:.75)--(-60:.75)--(60:.75)--cycle;
	
		\draw[thick, fill=black] (180:.75) circle (0.1);
		\draw[thick, fill=white] (-60:.75) circle (0.1);
		\draw[thick, fill=white] (60:.75) circle (0.1);
		
		\node at (60:1) [right] {$u_1$};
		\node at (180:.75) [above left] {$\bar u_2$};
		\node at (-60:1) [right] {$u_3$};
	\end{scope}
	
	\node at (-2.5,-2.05) [below] {$\bar u_3$};	
	\node at (-2.5,-1) [below left] {$u_3$};
	\node at (-2.5,-.05) [below] {$\bar u_2$};	
	\node at (-2.5,1) [left] {$u_2$};
	\node at (-2.5,2) [above left] {$\bar u_1$};	
	\node at (-2.5,3) [above] {$u_1$};
	
	\node at (2.5,-2) [right] {$u_3$};	
	\node at (2.5,-1) [above] {$\bar u_3$};
	\node at (2.5,-.05) [above right] {$u_2$};	
	\node at (2.5,1) [right] {$\bar u_2$};
	\node at (2.5,2) [above right] {$u_1$};	
	\node at (2.5,3) [above] {$\bar u_1$};
	
	\node at (0,2.5) {\sffamily $\mathsf{u_1}$ variable gadget};
	\node at (0,0.5) {\sffamily $\mathsf{u_2}$ variable gadget};
	\node at (0,-1.5) {\sffamily $\mathsf{u_3}$ variable gadget};
	
	\node[rotate=90] at (-8,.5) {\sffamily $\mathsf{C_1}$ clause gadget};
	\node[rotate=-90] at (8,.5) {\sffamily $\mathsf{C_2}$ clause gadget};

	\end{tikzpicture}
\caption{The gadget graph $G_\varphi$ for 
$\varphi=(\bar u_1\vee u_2\vee u_3)\wedge(u_1\vee \bar u_2\vee u_3)$.}
\label{fig:3-sat-graph}
\end{figure}

The next lemma is equivalent to the result shown by Wang and Kuo~\cite{WK88}. 
\begin{lemma}\cite{WK88} The {\sc planar 3-sat} instance $\varphi$ is satisfiable 
if and only if $G_\varphi$ has vertex cover set of size at most 
$2m+ \frac12\sum_{i=1}^n {r_i}$,
where $r_i$ is the number of vertices in the variable gadget's cycle for $u_i$. 
\end{lemma}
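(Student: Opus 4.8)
The plan is to exploit the fact that the target size $2m+\frac12\sum_{i=1}^n r_i$ is exactly the sum of the minimum vertex cover numbers of the individual gadgets viewed as disjoint graphs: each clause gadget is a triangle $T_j$ (with vertex cover number $2$) and each variable gadget $W_i$ for $u_i$ is an even cycle on $r_i$ vertices (with vertex cover number $r_i/2$). Since the $T_j$ and $W_i$ are pairwise vertex-disjoint — the connecting edges run between them but introduce no shared vertices — any vertex cover $S$ of $G_\varphi$ satisfies $|S| = \sum_j |S\cap V(T_j)| + \sum_i |S\cap V(W_i)| \ge 2m + \frac12\sum_i r_i$. The entire argument then rests on analysing when this inequality is tight.

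First I would record the key structural fact about the variable gadgets: the minimum vertex covers of an even cycle are exactly its two alternating sets of size $r_i/2$ (equivalently, the complements of its two maximum independent sets). Under the alternating labelling of $W_i$, one of these sets consists of all $u_i$-labelled vertices and the other of all $\bar u_i$-labelled vertices. I fix the convention that taking all $\bar u_i$-labelled vertices encodes $u_i=\text{true}$ and taking all $u_i$-labelled vertices encodes $u_i=\text{false}$; this choice is made precisely so that it meshes with the fact that each clause vertex is joined to a variable vertex carrying the opposite literal.

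For the forward direction I would take a satisfying assignment and build a cover of exactly the target size. In each $W_i$ I select the alternating set dictated by the assignment via the convention above, contributing $\frac12\sum_i r_i$ vertices. In each $T_j$ I pick one literal made true by the assignment, leave its vertex out of the cover, and take the other two triangle vertices; this contributes $2m$ vertices and covers every triangle edge. It remains to verify the connecting edges. The two chosen clause vertices cover their own connecting edges directly; the single omitted clause vertex carries a true literal $L$, and its connecting edge runs to a variable vertex carrying the opposite label $\bar L$, which by the convention lies in the selected alternating set — so that edge is covered as well.

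The reverse direction is where the equality analysis does the real work, and this is the step I expect to be the main obstacle. Given a cover $S$ with $|S|\le 2m+\frac12\sum_i r_i$, the disjointness bound forces $|S\cap V(T_j)|=2$ for every clause and $|S\cap V(W_i)|=r_i/2$ for every variable, so $S$ restricts to a minimum vertex cover on each gadget simultaneously. By the structural fact, each $S\cap V(W_i)$ is one of the two alternating sets, which I read off as a truth assignment through the convention. To check that it satisfies $\varphi$, fix a clause $C_j$: since exactly two of the three vertices of $T_j$ lie in $S$, exactly one literal vertex, carrying some literal $L$, is omitted; because $S$ covers the connecting edge leaving this omitted vertex, the opposite-label variable vertex at its other end must lie in $S$. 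As $S\cap V(W_i)$ consists of all vertices of a single label, this pins it down to the all-$\bar L$ set, which by the convention makes $L$ true and hence satisfies $C_j$. Since $j$ is arbitrary, the assignment satisfies $\varphi$, completing the equivalence.
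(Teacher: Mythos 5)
Your proof is correct and follows essentially the same route as the paper's: the same choice of alternating sets in the variable gadgets (all $\bar u_i$-vertices encoding $u_i=\textsc{true}$), the same two-out-of-three selection in each clause triangle, and the same tightness argument forcing exactly $r_i/2$ and $2$ vertices per gadget in the reverse direction. Your explicit statement of the disjointness lower bound and of the fact that the only minimum covers of an even cycle are its two alternating classes just makes precise what the paper leaves implicit.
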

\begin{proof}
	Suppose $\varphi$ is satisfiable. Take any satisfying assignment, and let 
	$U$ be the set of literals which take {\sc true} values in the assignment,
	$i.e.$ the literal ``$u_i$'' if the variable $u_i$ was assigned {\sc true}, 
	and the literal ``$\bar u_i$'' if the variable $u_i$ was assigned {\sc false}. 
	Let $\bar{U}$ be the remaining literals.
	Now, every vertex in a variable gadget of $G_\varphi$ whose label is in 
	$\bar{U}$ will be selected to be in the vertex cover. 
	In total this amounts to $\frac12\sum_{i=1}^n {r_i}$ vertices, 
	and these {\em cover} every edge in the variable gadgets of $G_\varphi$.
	Next consider the clause gadgets of $G_\varphi$. 
	We must select two vertices of each clause gadget to cover the edges of the 
	$3$-cycle. This amounts to $2m$ vertices. 
	Since we have a satisfying assignment and we chose the nodes 
	corresponding to $\bar{U}$ in the variable gadgets,
	each clause gadget must have at least one incident edge covered by the
	variable gadgets' selected vertices. 
	Hence, selecting the other two vertices will cover all incident edges
	to the clause gadgets, and all edges of the gadget's cycle.
	Thus we have a vertex cover with $2m+ \frac12\sum_{i=1}^n {r_i}$ nodes. 
	For example, in Figure~\ref{fig:3-sat-graph}, if we set all variables
	to {\sc false}, one possible vertex cover is the set of vertices labelled
	with non-negated literals, $i.e.$ those coloured in white. (Note, clearly, the set of white vertices
	will not typically form a vertex cover in the gadget graph.)
	
	Conversely, suppose we have a vertex cover $\mathcal{C}$ containing at 
	most $2m+\sum_{i=1}^n \tfrac{r_i}2$ vertices. 
	Each variable gadget must contribute at least $\tfrac{r_i}2$ vertices, 
	otherwise we cannot cover every edge in its cycle.
	Each clause gadget must contribute at least two vertices, or one edge in 
	the $3$-cycle will be uncovered. 
	Hence, $\mathcal{C}$ contains exactly $2m+\sum_{i=1}^n \tfrac{r_i}2$ 
	vertices.
	The $\frac12 r_i$ vertices from the variable gadget for $u_i$ corresponds 
	either to the set of all vertices with negated labels or to the set with 
	non-negated labels, otherwise there is an uncovered edge in the cycle. 
	This induces a truth assignment; set $u_i$ to {\sc true} if all the 
	``$\bar u_i$''-labelled vertices are selected, 
	and {\sc false} if the ``$u_i$''-labelled vertices are selected.
	Furthermore this is a satisfying assignment. 
	To see this note that as $\mathcal{C}$ covers all edges, the unselected 
	vertex in each clause is a literal which evaluates to {\sc true} by the
	selected assignment.
	\qed
\end{proof}

Hence, to solve for the satisfiability of $\varphi$, it suffices to test 
whether $G_\varphi$ admits a vertex cover with at most 
$2m+\tfrac12\sum_{i=1}^n r_i$ vertices. 
It remains to show the second of the three parts of the reduction. That is,
we need to show that this {\sc vertex cover} problem in the undirected 
gadget graph can be solved by finding a minimum directed feedback vertex set 
in an oriented-disc graph $D_\varphi$.
The basic idea is straightforward (albeit that the implementation is intricate). 
The oriented-disc graph $D_\varphi$ will contain a digon for each edge in some
$G_\varphi$. However, it will also contain a collection of additional arcs.
The key fact will be that these additional arcs form an acyclic subgraph 
of $D_\varphi$.
Thus every cycle in $D_\varphi$ must induce a digon. 
Consequently, a minimum directed feedback vertex set need only intersect 
each digon to ensure that every cycle is hit.
As argued previously, hitting the underlying graph formed by the 
digons of $D_\varphi$ 
corresponds to selecting a vertex cover in $G_\varphi$, as desired.
We now formalise this argument.

\begin{lemma}For every instance $\varphi$ of {\sc planar 3-sat}, 
there exists an oriented-disc graph $D_\varphi$ on which the 
{\sc directed feedback vertex set} problem is equivalent to the 
{\sc vertex cover} problem on $G_\varphi$.
\end{lemma}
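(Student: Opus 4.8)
The plan is to construct $D_\varphi$ so that its \emph{digons} are in bijection with the edges of $G_\varphi$ (identifying the two vertex sets), while all remaining arcs form an acyclic subgraph; granting such a construction, the equivalence is a short combinatorial argument. Each digon is a directed $2$-cycle, so any feedback vertex set must contain an endpoint of every digon and is therefore a vertex cover of $G_\varphi$; conversely, if $S$ is a vertex cover of $G_\varphi$, then deleting $S$ from $D_\varphi$ kills every digon, and any directed cycle surviving in $D_\varphi - S$ would have to use only non-digon arcs (a surviving cycle through a digon arc $(\bx_i,\bx_j)$ would keep both endpoints, but $S$ contains one of them) --- contradicting the acyclicity of the non-digon arcs. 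Hence $S$ is a feedback vertex set, the two optimal values coincide, and optimal solutions transfer both ways. So all the content lies in the geometric realization, and I would isolate the two properties it must guarantee: (i) every edge of $G_\varphi$ is a digon and no non-edge is, and (ii) the non-digon arcs are acyclic.

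For the realization I would start from a planar embedding of the incidence graph $H_\varphi$, which exists because $\varphi$ is planar, and expand it into a template: each variable vertex becomes a region carrying the even variable-gadget cycle, each clause vertex a small region carrying the triangle, and each incidence edge a disjoint corridor carrying one connecting edge. Each piece is then drawn with oriented discs. The clause triangle is obtained by clustering its three points so tightly that each disc contains the other two, producing three mutual digons. A variable cycle is obtained by placing its points in convex position along a closed curve and tuning each disc through $\bx_i$ to reach exactly its two cycle-neighbours, giving digons precisely between consecutive points. A connecting edge is realized as a single digon (or a short chain of digons) routed along its corridor. Choosing every radius small relative to the separation between distinct regions prevents spurious containments --- and in particular spurious digons --- between gadgets not joined by a corridor, which is exactly what planarity buys us.

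The heart of the matter, and the step I expect to be the main obstacle, is property (ii): controlling the \emph{extra} arcs that these discs inevitably create where one disc slightly overreaches its designed neighbour, or where a corridor disc brushes a gadget cycle. The plan is to lay the whole template out so that all such arcs respect a single acyclic ordering --- for instance, to fix a generic linear height function on the plane and orient each cycle and each corridor so that every non-digon containment points strictly downhill, while the digons (running both ways) are left unconstrained. Because planarity lets the corridors be routed pairwise-disjointly, these local orientations can be chosen consistently, so the non-digon arcs admit a common topological order and cannot close into a cycle. Certifying that one concrete choice of points and radii simultaneously yields \emph{exactly} the digons of $G_\varphi$ and only height-monotone extra arcs is the intricate verification that carries the lemma; once it is in place, the combinatorial equivalence of the first paragraph completes the proof.
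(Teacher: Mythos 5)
Your overall architecture is the same as the paper's: lay the drawing out along a planar embedding of $H_\varphi$, realise each edge of $G_\varphi$ as a digon while keeping every other arc in an acyclic subgraph, and then run the short combinatorial argument (which you state correctly) showing that a directed feedback vertex set of $D_\varphi$ need only hit the digons and hence coincides with a vertex cover of $G_\varphi$. The difference is that the paper actually discharges the geometric obligation, and you do not. The paper exhibits an explicit oriented-disc drawing of the clause gadget together with the initial segments of the three incident variable cycles, from which one reads off that the only non-digon arcs are a handful of arcs pointing outward from the clause triangle into those segments --- an out-branching, hence acyclic --- and it exhibits digon-only constructions for straight and curved paths, so that the variable cycles and connections contribute no stray arcs at all. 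You instead name this verification as ``the main obstacle'' and leave it open. Since the lemma \emph{is} the existence of such a drawing, that is a genuine gap: the proof must actually produce a clause-gadget drawing with the required digons, no spurious digons, and acyclic leftovers, not merely assert that one should exist.

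Two further points on the sketch itself. First, your proposed mechanism for acyclicity --- a single global height function with every non-digon arc pointing strictly downhill --- is stronger than necessary and likely not implementable as stated: the clause gadget sends corridors off in three directions, and no one linear functional orients the stray arcs at all three attachment points consistently; the local observation that the stray arcs all point away from the clause triangle (and that the path constructions produce none) is what actually closes the argument. Second, realizing a connecting edge of $G_\varphi$ as ``a short chain of digons'' silently subdivides that edge and changes the vertex-cover count $2m+\tfrac12\sum_i r_i$ used in the preceding lemma; the paper avoids this by absorbing the connection into the clause-gadget drawing and letting only the variable cycles (whose length is a free even parameter) grow. As written, your reduction would need either that adjustment or a recount.
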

\begin{proof}
We prove this by explicitly constructing the oriented-disc drawing. 
Recall the disc graph $D_\varphi$ should contain a digon for each edge in $G_\varphi$.
To do this, we begin with sufficiently a large planar drawing of $H_\varphi$, the planar 
bipartite network associated with $\varphi$.
At each clause vertex, we place an oriented-disc construction for the clause gadget.
This construction, along with its resulting 
graph, is shown in Figure \ref{fig:clause-gadget}. The figure
shows a clause gadget and a section of each of the neighbouring three
variable gadgets to which it is attached. Observe from the figure that, as claimed, 
the set of arcs created in $D_\varphi$ which 
are not in a digon, form an acyclic subgraph of $D_\varphi$.

\begin{figure}[p]\centering
\begin{tikzpicture}[scale=0.2]
	\clip (-29,-31) rectangle (45,21);
	\foreach \r in {0,1,2} {
		\begin{scope}[rotate=120*\r] 
			\coordinate (A) at (0,5);
			\coordinate (B) at (8.5,0);
			\coordinate (C) at (11,2.25);
			\coordinate (D) at (14,-2);
			\coordinate (E) at (17,2);
			\coordinate (F) at (20,-2);
			\coordinate (G) at (10,5);
			\coordinate (H) at (8.75,5.75);
			\coordinate (I) at (12,7);
			\coordinate (J) at (13,14);
			\coordinate (K) at (16,11);
			\coordinate (L) at (19,15);
			\coordinate (M) at (22,11);
						
			\oriendisc{(A)}{(5.5,-9.5)}
			\oriendisc{(B)}{(4,14)}
			\oriendisc{(C)}{(11,-5.75)}
			\oriendisc{(D)}{(14,5.5)}
			\oriendisc{(E)}{(17,-6)}
			\oriendisc{(F)}{(20,6)}
			
			\oriendisc{(G)}{(1,1)}
			\oriendisc{(H)}{(13.75,5.75)}
			\oriendisc{(I)}{(2,17)}
			\oriendisc{(J)}{(13,6)}
			\oriendisc{(K)}{(16,19)}
			\oriendisc{(L)}{(19,7)}
			\oriendisc{(M)}{(22,19)}
		\end{scope}
	}
	
	\begin{scope}[scale = 0.6,rotate = -75, shift={(38,45)}]
			\foreach \r in {0,1,2} {
			\begin{scope}[rotate=120*\r]
				\fill [color=black] (0,2) circle (0.7);
				\fill [color=black] (0,5) circle (0.7);
				
				\fill [color=black] (3.5,6.5) circle (0.7);
				\fill [color=black] (6,10) circle (0.7);
				\fill [color=black] (7,13) circle (0.7);
				\fill [color=black] (7,16) circle (0.7);
								
				\fill [color=black] (-3.5,6.5) circle (0.7);
				\fill [color=black] (-6,10) circle (0.7);
				\fill [color=black] (-7,13) circle (0.7);
				\fill [color=black] (-7,16) circle (0.7);
				\fill [color=black] (-7,19) circle (0.7);
				\fill [color=black] (-7,22) circle (0.7);
				
				\draw[thick] (0,2)--(-30:2);
				\draw[thick] (0,2)--(0,5);
				
				\draw[thick] (0,5)--(3.5,6.5);
				\draw[thick] (0,5)--(-3.5,6.5);

				\draw[thick] (3.5,6.5)--(6,10);
				\draw[thick] (-3.5,6.5)--(-6,10);

				\draw[thick] (6,10)--(7,13);
				\draw[thick] (-6,10)--(-7,13);

				\draw[thick] (7,13)--(7,16);
				\draw[thick] (-7,13)--(-7,16);
				
				\draw[thick] (-7,16)--(-7,19);
				\draw[thick] (-7,19)--(-7,22);
				
				\draw[-stealth, dotted, thick] (0,5) to[bend right=40] (-5,10);
				\draw[-stealth, dotted, thick] (0,5) to[bend right=50] (-6,13);
			\end{scope}
			}
		\end{scope}
\end{tikzpicture}	
\caption{Oriented-disc construction of the clause gadget, and its resulting graph.}
\label{fig:clause-gadget}

\vspace*{2em}

\subfloat[][Straight line]{\centering
	\begin{tikzpicture}
	\clip (-.625,-.875) rectangle (5.125,.875);
	\begin{scope}[scale = 0.2]
		\foreach \s in {0,1,...,6} {
			\oriendisc{(3.5*\s,1)}{(3.5*\s,-4)}
			\oriendisc{(3.5*\s+1.75,-1)}{(3.5*\s+1.75,4)}
		}
	\end{scope}
	\end{tikzpicture}
} 	\qquad \subfloat[][Curve]{\centering
	\begin{tikzpicture}
	\clip (-2.5,1.125) rectangle (2.5,3.1);
	\begin{scope}[scale = 0.2, rotate=45]
		\foreach \r in {0,1,...,6} {
			\oriendisc{(15*\r:10)}{(15*\r:15)}
		}
		\foreach \r in {0,1,...,5} {
			\oriendisc{(15*\r+7.5:12)}{(15*\r+7.5:8)}
		}
	\end{scope}
	\end{tikzpicture}
}
	
	\caption{Paths of bidirected edges as oriented-disc drawings.}
	\label{fig:path-graph}
	
\vspace*{2em}
	
	\includegraphics[height=5cm]{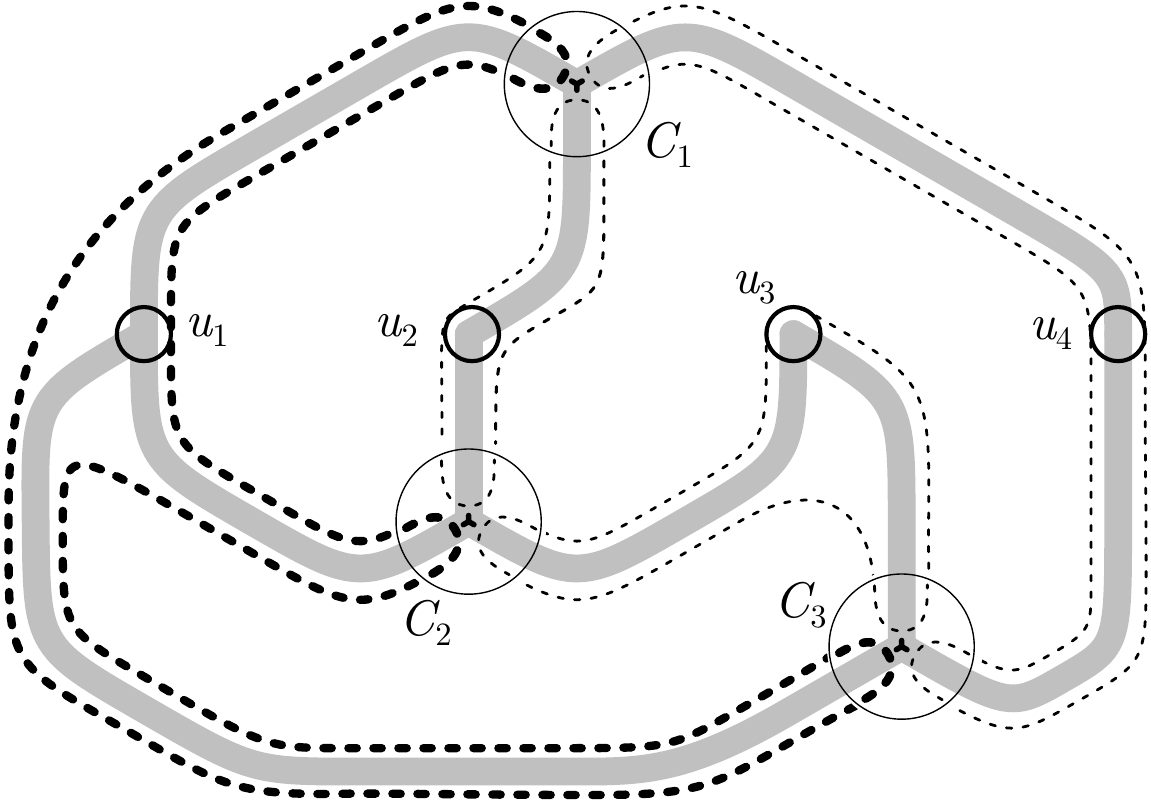}
	\caption{$G_\varphi$ as an oriented-disc graph.}
	\label{fig:3-sat-orien-graph}
\end{figure}

It remains to construct the large cycles for the variable gadgets, and connect
them to the clause gadgets. 
However, parts of these cycles are already included in the clause gadgets.
Thus, it suffices to join these cycle segments together via paths
of digons.
This can be done via the oriented disc constructions shown in 
Figure~\ref{fig:path-graph}.
To draw the cycle for some variable, say $u_i$, we note that $u_i$'s
vertex in the planar network $H_\varphi$ shares and edge with every clause
gadget which connects to $u_i$'s gadget.
Hence, as illustrated in Figure~\ref{fig:3-sat-orien-graph}, we may
follow along the edges of $H_\varphi$ to construct the cycle.
For example, in the figure, the variable cycle for $u_1$ (highlighted)
follows the topology of the edges incident to $u_1$'s vertex, and joins 
the clause gadgets (circled) to one another.

Observe that constructions in Figure~\ref{fig:path-graph} produce paths of digons 
in $D_\varphi$, where every arc produced is contained in a digon. 
It follows that the only arcs in $D_\varphi$ that are not in digons are in the
neighbourhoods of the clause gadgets and, as we have seen, these are acyclic.
But then, to hit all the cycles in $D_\varphi$, it suffices to hit all the digons, 
which, in turn, corresponds to a vertex cover in $G_\varphi$, completing the proof.
	\qed
\end{proof}
This completes all the steps in the reduction and we obtain:
	
	\begin{theorem} The {\sc consumer rationality} problem is NP-complete for a market with at 
	least 3 commodities. \qed
	\end{theorem}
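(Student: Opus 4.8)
The plan is to assemble the three lemmas already established into a single polynomial-time reduction from {\sc planar 3-sat}, and to verify membership in NP separately. First I would show the problem lies in NP. A certificate is a set $S$ of at most $k$ data points; to verify it, one removes $S$, constructs the revealed preference graph on the remaining bundles by testing the condition $\bp_i\cdot\bx_j\le\bp_i\cdot\bx_i$ for each ordered pair (which is polynomial), and then tests the resulting digraph for acyclicity by a topological sort. Since {\sc garp} holds if and only if the preference graph is acyclic, this correctly decides whether the reduced data-set is rationalisable, and it runs in polynomial time.

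For hardness I would chain the three reductions for the case of exactly three commodities. Given an instance $\varphi$ of {\sc planar 3-sat} with $m$ clauses and variable gadgets of cycle lengths $r_i$, build the gadget graph $G_\varphi$; by the lemma of Wang and Kuo, $\varphi$ is satisfiable if and only if $G_\varphi$ has a vertex cover of size at most $2m+\tfrac12\sum_i r_i$. Next, realise $G_\varphi$ as the digon-structure of the oriented-disc graph $D_\varphi$; by the construction lemma, a minimum directed feedback vertex set in $D_\varphi$ has exactly the size of a minimum {\sc vertex cover} in $G_\varphi$, since the non-digon arcs of $D_\varphi$ form an acyclic subgraph and hence every directed cycle contains a digon. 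Finally, apply the stereographic-projection lemma to realise $D_\varphi$ as the preference graph $D_\succeq$ of a three-commodity data-set. Because the consumer rationality problem is precisely the directed feedback vertex set problem on $D_\succeq$, we conclude that $\varphi$ is satisfiable if and only if the resulting data-set can be made to satisfy {\sc garp} by deleting at most $2m+\tfrac12\sum_i r_i$ points. As {\sc planar 3-sat} is NP-hard and each step is polynomial, this establishes NP-hardness for three commodities.

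To extend to any market with $m'\ge 3$ commodities, I would pad each price and bundle vector of the three-commodity instance with $m'-3$ additional zero coordinates. Every inner product $\bp_i\cdot\bx_j$ is unchanged by this padding, so the preference graph, and hence the entire reduction, is identical. Combining membership in NP with NP-hardness then yields NP-completeness.

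The step I expect to require the most care is the composition itself, specifically confirming that the whole pipeline is genuinely polynomial-time: the oriented-disc drawing of $D_\varphi$ uses only $O(m+\sum_i r_i)$ points and discs, and one must check that the subsequent embedding onto the sphere, together with the derived prices $\bp_i$ and bundles $\bx_i$, can be described with a polynomial number of bits. The combinatorial content is fully settled by the three lemmas; the residual obstacle is the bookkeeping of size and numerical precision across the geometric construction.
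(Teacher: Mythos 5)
Your proposal is correct and follows essentially the same route as the paper: the theorem is obtained by composing the three lemmas of Section~4 (planar 3-sat to vertex cover in $G_\varphi$, vertex cover to directed feedback vertex set in the oriented-disc graph $D_\varphi$, and oriented-disc graphs to three-commodity preference graphs). Your explicit treatment of NP membership, the zero-coordinate padding for markets with more than three commodities, and the remark about bit-size of the spherical embedding are sensible housekeeping that the paper leaves implicit, but they do not change the argument.
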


 \end{document}